\newlength{\extramargin}
\newtheorem{thm}{Theorem}[section]
\newtheorem{lem}[thm]{Lemma}
\newtheorem{defn}[thm]{Definition}
\newtheorem{prop}[thm]{Proposition}
\newtheorem{preremark}[thm]{Remark}
\newenvironment{remark}%
  {\begin{preremark}\upshape}{\end{preremark}}
\newtheorem{preexample}[thm]{Example}
  {\begin{preexample}\upshape}{\end{preexample}}
\numberwithin{equation}{section}
\numberwithin{equation}{section}
\numberwithin{thm}{section}
\newtheorem{lem/defn}[thm]{Lemma/Definition}
\newtheorem{preex/defn}[thm]{Example/Definition}
\newenvironment{ex/defn}%
  {\begin{preex/defn}\upshape}{\end{preex/defn}}
\newcommand{\abs}[1]{\lvert#1\rvert}
\DeclareMathOperator{\End}{End}
\begin{document}

\title[Multilocal  bosonization]{Multilocal  bosonization}
% author one information
\author{Iana I. Anguelova}

\address{Department of Mathematics,  College of Charleston,
Charleston SC 29424 }
\email{anguelovai@cofc.edu}

% Use this \subjclass if you are using amsproc version 2.0 (December
% 1999).
\subjclass[2010]{81T40,  17B69, 17B68, 81R10}
\date{\today}

\keywords{bosonization, vertex algebras, CCR algebras, conformal field theory}

\begin{abstract}
We present a bilocal  isomorphism between the algebra generated by a single real twisted boson field and the  algebra of the boson $\beta\gamma$ ghost system. As a consequence of this twisted vertex algebra isomorphism we show that each of these two algebras possesses both an untwisted and a twisted Heisenberg bosonic currents, as well as three separate families of Virasoro fields. We show that this bilocal isomorphism generalizes to an isomorphism between the algebra generated by the twisted boson field with $2n$ points of localization and the algebra of the $2n$ symplectic bosons.
\end{abstract}

\maketitle

%\tableofcontents

\section{Introduction}
\label{sec:intro}

Bosonization, namely the representation of given chiral fields (Fermi or Bose) via  bosonic fields, has long been studied both in  the physics and the mathematics literature (see e.g. \cite{Bosonization}).
  Perhaps the best known instance   is the bosonization of the charged free fermions: one of the two directions of an isomorphism often referred to as "the" boson-fermion correspondence.   There are other examples of boson-fermion correspondences, such as the super boson-fermion correspondence (\cite{Kac}), the boson-fermion correspondences of type B (\cite{DJKM-4}, \cite{Ang-Varna2}) and  of type D-A (\cite{AngTVA}, \cite{AngD-A}), and others. Another well known instance of bosonization is the Friedan-Martinec-Shenker (FMS) bosonization (\cite{FMS}), which expresses the bosonic fields of the $\beta\gamma$ ghost system through lattice vertex algebra operators (and so the FMS bosonization is a boson-boson correspondence). One particular feature of both the boson-fermion correspondence and the FMS bosonization is that both of them are vertex algebra isomorphisms, and as such all the fields in these isomorphisms are local only at the usual $z= w$ point. Since some of the boson-fermion correspondences, such as the correspondences of types B and D-A,  are multilocal, with at least 2 points of locality  at $z= w$ and $z=-w$ (i.e., at the 2nd roots of unity), recently  there have been continuing  research into  multilocal bosonizations (e.g. \cite{AngTVA}, \cite{ACJ}) as well as multilocal fermionization  (e.g. \cite{Rehren}).  The term fermionization refers to the representation of given fields in terms of fermionic fields, as in the case of the representation of the Heisenberg bosonic current as a normal ordered  product of the two charged fermions (and thus constituting the other direction of the boson-fermion correspondence).  In \cite{Ang-Varna2}, \cite{AngTVA} and \cite{Rehren} another,  multilocal, fermionization of the Heisenberg bosonic current  was  constructed: for instance in the case of $N=2$   one obtains the Heisenberg field  as a bilocal normal ordered (Wick) product of a real neutral Fermi field at two
different points. This bilocal fermionization  is in fact invertible; i.e., one can bosonize the  real neutral fermion field resulting in the boson-fermion correspondence of type D-A (\cite{AngTVA}, \cite{AngD-A}). In \cite{Rehren} also the  multilocal fermionization of fermions was presented:  an isomorphism between the Canonical Anticommutation Relations (CAR) algebra of the charged  free fermion
 fields and the CAR algebra of $N$
real neutral  Fermi fields; an isomorphism achieved at the price of multilocality at the $N$th roots of unity.  As a counterpart to \cite{Rehren},  this paper studies  multilocal bosonization: we start with a single twisted boson field $\chi (z)$, localized at $z=-w$,  and show that there is an isomorphism that equates the algebra generated by $\chi (z)$ with the algebra generated by the $\beta\gamma$ boson ghost system (for $N=2$). Although the isomorphism we present certainly incorporates an isomorphism of  Canonical Commutation Relations (CCR) algebras, it is more than that: it is an isomorphism of twisted vertex algebras. Twisted vertex algebras were defined in \cite{AngTVA} and \cite{ACJ} to describe, in particular, the cases of the boson-fermion correspondences of types B and D-A, and in general to describe the chiral field algebras generated by fields that are multi-local with points of locality at roots of unity. (In Section \ref{notation} we recall the most necessary notations, definitions and facts pertaining to multi-local Operator Product Expansion (OPEs), multilocal normal ordered products and twisted vertex algebras). The twisted vertex algebra isomorphism we present bosonizes  the $\beta\gamma$ ghost system in a different way from the FMS bosonization, namely by using bi-locality in an essential way. Thus the $\beta\gamma$ ghost system is on the one hand equivalent to a lattice vertex algebra through the FMS bosonization, and on the other hand  to the bi-local twisted vertex algebra generated by a single twisted boson field through the bosonization we present. This twisted vertex algebra isomorphism, although simple,  has  far reaching consequences: it induces the two-way transfer of the bilocal normal ordered (Wick) products between the algebra generated by the field $\chi (z)$ on one side, and the $\beta\gamma$ system on the other. In particular, as we show in Section \ref{section:betagamma}, this means that the algebra generated by $\chi (z)$ "inherits" an untwisted Heisenberg current from the $\beta\gamma$ system, but also the $\beta\gamma$ system inherits a twisted Heisenberg field from the twisted boson $\chi (z)$ via this isomorphism. This interchange of structures is also carried to  the Virasoro fields generated by the normal ordered products: each algebra will possess three separate families of Virasoro fields, as we show at the end of  Section \ref{section:betagamma}. Since more is known about the $\beta\gamma$ system, perhaps the first step  in a future research is to apply the structures and representations inherited by  the twisted boson vertex algebra  from the $\beta\gamma$ system (as in e.g. \cite{WakimotoFock},  \cite{FF1}, \cite{FF2}, \cite{Wangbosonization},  \cite{FFWakimoto}, \cite{FrenkelWakimoto}), in particular the representations of  the $W_{1+\infty}$ algebra (see e.g. \cite{KR-W}, \cite{Matsuo}) and the $W_3$ algebra (\cite{BC}, \cite{BMP}, \cite{Wangbosonization}). This should help explain the additional symmetries of the CKP hierarchy (derived in \cite{Ma}) with which the field $\chi (z)$ is associated (\cite{DJKM6}).

In Section \ref{section: symplectic} we show that the twisted vertex algebra  isomorphism we presented in Section \ref{section:betagamma} can be extended to general $N=2n$:  the algebra generated by the single twisted boson field $\chi (z)$ is isomorphic to the algebra generated by the $2n$ symplectic bosons (\cite{GoddardSympl}), an  isomorphism  achieved at the expense of localizing the twisted vertex algebra generated by $\chi (z)$ at the $N=2n$ roots of unity.

\section{Notation and background}
\label{notation}

We work over the field of complex numbers $\mathbb{C}$. Let $N$ be a positive integer, and let $\epsilon$ be a primitive $N$-th root of unity. Recall that in two-dimensional chiral field theory a \textbf{field}
$a(z)$ on a vector space $V$ is a series of the form
\begin{equation*}
a(z)=\sum_{n\in \mathbb{Z}}a_{(n)}z^{-n-1}, \ \ \ a_{(n)}\in
\End(V), \ \ \text{such that }\ a_{(n)}v=0 \ \ \text{for any}\ v\in V, \ n\gg 0.
\end{equation*}
The coefficients $a_{(n)}\in \End(V)$ are called modes. (See e.g. \cite{FLM}, \cite{FHL},  \cite{Kac}, \cite{LiLep}). If $V$ is a vector space, denote by $V((z))$ the  vector space of formal Laurent series in $z$ with coefficients in $V$. Hence a field on $V$ is a linear map $V\to V((z))$.

We will need also the following generalization of locality to multi-locality:
\begin{defn}(\cite{ACJ})  \label{defn:parity} {\bf ($N$-point self-local fields and parity) }
We say that a field $a(z)$ on a vector space $V$ is {\bf even} and $N$-point self-local at $1, \epsilon, \epsilon^2, \dots, \epsilon^{N-1}$,  if there exist $n_0, n_1, \dots  , n_{N-1}\in \mathbb{Z}_{\geq 0}$ such that
\begin{equation}
(z- w)^{n_{0}}(z-\epsilon w)^{n_{1}}\cdots (z-\epsilon^{N-1} w)^{n_{N-1}}[a(z),a(w)] =0.
\end{equation}
In this case we set the {\bf parity} $p(a(z))$ of $a(z)$ to be $0$.
\\
We set $\{a, b\}  =ab +ba$.We say that a field $a(z)$ on $V$ is $N$-point self-local at $1, \epsilon, \epsilon^2, \dots, \epsilon^{N-1}$
and {\bf odd} if there exist $n_0, n_1, \dots , n_{N-1}\in \mathbb{Z}_{\geq 0}$ such that
\begin{equation}
(z- w)^{n_{0}}(z-\epsilon w)^{n_{1}}\cdots (z-\epsilon^{N-1} w)^{n_{N-1}}\{a(z),a(w)\}=0.
\end{equation}
In this case we set the {\bf parity} $p(a(z))$ to be $1$. For brevity we will just write $p(a)$ instead of $p(a(z))$. If $a(z)$ is even or odd field, we say that $a(z)$ is homogeneous.\\
Finally,  if $a(z), b(z)$ are homogeneous fields on $V$, we say that $a(z)$ and $b(z)$ are {\it $N$-point mutually local} at $1, \epsilon, \epsilon^2, \dots, \epsilon^{N-1}$
if there exist $n_0, n_1, \dots , n_{N-1} \in \mathbb{Z}_{\geq 0}$ such that
\begin{equation}
(z- w)^{n_{0}}(z-\epsilon w)^{n_{1}}\cdots (z-\epsilon^{N-1} w)^{n_{N-1}}\left(a(z)b(w)-(-1)^{p(a)p(b)}b(w)a(z)\right)=0.
\end{equation}
\end{defn}
For a rational function $f(z,w)$,  with poles only at $z=0$,  $z=\epsilon^i w, \ 0\leq i\leq N-1$, we denote by $i_{z,w}f(z,w)$
the expansion of $f(z,w)$ in the region $\abs{z}\gg \abs{w}$ (the region in the complex $z$ plane outside of all  the points $z=\epsilon^i w, \ 0\leq i\leq N-1$), and correspondingly for
$i_{w,z}f(z,w)$.
Let
\begin{equation}
a(z)_-:=\sum_{n\geq 0}a_nz^{-n-1},\quad a(z)_+:=\sum_{n<0}a_nz^{-n-1}.
\end{equation}
\begin{defn} \label{defn:normalorder} {\bf (Normal ordered product)}
Let $a(z), b(z)$ be homogeneous fields on a vector space $V$. Define
\begin{equation}
:a(z)b(w):=a(z)_+b(w)+(-1)^{p(a)p(b)}b(w)a_-(z).
\end{equation}
One calls this the normal ordered product of $a(z)$ and $b(w)$. We extend by linearity the notion of normal ordered product  to any two fields which are linear combinations of homogeneous fields.
\end{defn}
\begin{remark}
Let  $a(z), b(z)$ be fields on a vector space $V$. Then
$:a(z)b(\epsilon ^i z):$ and $:a(\epsilon ^i z)b( z):$ are well defined fields on $V$ for any $i=0, 1, \dots,  N-1$.
\end{remark}
The mathematical background of the well-known and often used in physics notion of Operator Product Expansion (OPE) of product of two fields for case of usual locality ($N=1$) has been established for example in \cite{Kac}, \cite{LiLep}.
The following lemma extended the mathematical background  to the case of  $N$-point locality and we will use it extensively in what follows:
\begin{lem} (\cite{ACJ}) {\bf (Operator Product Expansion (OPE))}\label{lem:OPE} \\
Suppose $a(z)$, $b(w)$ are {\it $N$-point mutually local}. Then exist fields $c_{jk}(w)$, $j=0, \dots, N-1; k=0, \dots , n_j-1$, such that we have
 \begin{equation}
 \label{eqn:OPEpolcor}
 a(z)b(w) =i_{z, w} \sum_{j=0}^{N-1}\sum_{k=0}^{n_j-1}\frac{c_{jk}(w)}{(z-\epsilon^j w)^{k+1}} + :a(z)b(w):.
 \end{equation}
We call the fields $c_{jk}(w)$, $j=0, \dots, N-1;\  k=0, \dots , n_j-1$,  OPE coefficients. We will write the above OPE as
 \begin{equation}
 a(z)b(w) \sim  \sum_{j=1}^N\sum_{k=0}^{n_j-1}\frac{c_{jk}(w)}{(z-\epsilon_j w)^{k+1}}.
 \end{equation}
 The $\sim $ signifies that we have only written the singular part, and also we have omitted writing explicitly the expansion $i_{z, w}$, which we do acknowledge  tacitly.
 \end{lem}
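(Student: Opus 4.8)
The plan is to adapt the classical argument for the $N=1$ Operator Product Expansion (as in \cite{Kac}, \cite{LiLep}) so that it handles the several pole locations $z=\epsilon^j w$ simultaneously. Write $p=p(a)p(b)$ and set
\begin{equation*}
P(z,w):=(z-w)^{n_0}(z-\epsilon w)^{n_1}\cdots (z-\epsilon^{N-1}w)^{n_{N-1}},
\end{equation*}
the polynomial furnished by the $N$-point mutual locality of $a(z),b(w)$; note that it is monic of degree $n:=n_0+\cdots+n_{N-1}$ in the variable $z$. Since $:a(z)b(w):=a(z)_+b(w)+(-1)^p b(w)a(z)_-$ by Definition \ref{defn:normalorder}, the identity \eqref{eqn:OPEpolcor} is equivalent to showing that the contraction
\begin{equation*}
S(z,w):=a(z)b(w)-:a(z)b(w):=a(z)_-b(w)-(-1)^p b(w)a(z)_-
\end{equation*}
equals $i_{z,w}$ of a rational function whose only poles in $z$ lie at the points $z=\epsilon^j w$, with orders at most $n_j$; the remaining content of \eqref{eqn:OPEpolcor} is then just the definition of the symbol $\sim$.

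I would argue as follows. First, $S(z,w)$ involves only strictly negative powers of $z$, since both $a(z)_-b(w)$ and $b(w)a(z)_-$ do, whereas the companion expression $T(z,w):=(-1)^p b(w)a(z)-:a(z)b(w):=(-1)^p b(w)a(z)_+-a(z)_+b(w)$ involves only non-negative powers of $z$. Their difference is $S(z,w)-T(z,w)=a(z)b(w)-(-1)^p b(w)a(z)$, so $N$-point mutual locality (Definition \ref{defn:parity}) gives $P(z,w)S(z,w)=P(z,w)T(z,w)=:Q(z,w)$. Now multiplication by the degree-$n$ polynomial $P$ shows, via $Q=P\cdot S$, that $Q$ has $z$-powers at most $n-1$, and via $Q=P\cdot T$ that $Q$ has only non-negative $z$-powers; hence $Q(z,w)=\sum_{m=0}^{n-1}q_m(w)z^m$ with each $q_m(w)$ a field on $V$. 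Because $P$ is monic in $z$, the only formal series in negative powers of $z$ annihilated by $P$ is $0$ (compare leading $z$-degrees), and since $P\cdot i_{z,w}\big(Q/P\big)=Q=P\cdot S$ we get $S(z,w)=i_{z,w}\big(Q(z,w)/P(z,w)\big)$. Finally, $Q/P$ is a rational function in $z$ with numerator degree $<\deg_z P$ and poles only at $z=\epsilon^j w$ of order $\le n_j$, so its partial fraction expansion in $z$ reads $\sum_{j=0}^{N-1}\sum_{k=0}^{n_j-1}c_{jk}(w)(z-\epsilon^j w)^{-k-1}$; each coefficient $c_{jk}(w)$ is given by the usual residue formula and is a $\mathbb{C}[w,w^{-1}]$-linear combination of the fields $q_m(w)$ — the only denominators produced are powers of $\epsilon^i w-\epsilon^j w=(\epsilon^i-\epsilon^j)w$ — hence is itself a field on $V$. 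Substituting this expansion back into $S$ yields \eqref{eqn:OPEpolcor}.

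The computations involved are routine; the one place that needs genuine care is the rigorous meaning of the products $a(z)b(w)$, $b(w)a(z)$ and $P(z,w)S(z,w)$ as elements of a suitable completion of $\End(V)[[z^{\pm1},w^{\pm1}]]$, and in particular the verification that $Q(z,w)$ really is a polynomial in $z$ with \emph{field-valued} (not merely formal-series) coefficients and that the residue formulas for the $c_{jk}(w)$ produce series with only finitely many negative powers of $w$ on each vector. This is precisely the $N$-point analogue of the standard $N=1$ bookkeeping of \cite{Kac}, \cite{LiLep}, and the root-of-unity pole locations introduce no new obstacle beyond carrying the index $j$ through the degree count and the partial fraction step.
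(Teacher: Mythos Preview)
Your argument is correct and is the natural $N$-point generalization of the standard $N=1$ OPE proof. Note, however, that the present paper does not itself prove this lemma: it is quoted from \cite{ACJ} and stated without proof here, so there is no in-paper argument to compare against. Your approach---multiplying the contraction and its companion by the locality polynomial $P(z,w)$, reading off that $Q=P\cdot S=P\cdot T$ is a polynomial in $z$ of degree $<n$ by the two degree bounds, and then taking the partial fraction decomposition of $Q/P$ with respect to the roots $z=\epsilon^j w$---is exactly the expected one and is essentially how the result is established in \cite{ACJ}; the only genuinely new bookkeeping over the $N=1$ case is the partial fraction step and the observation that the resulting denominators $(\epsilon^i-\epsilon^j)w$ introduce at worst powers of $w^{-1}$ into the $c_{jk}(w)$, which you note.
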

  \begin{remark}
 Since  the notion of normal ordered product is extended by linearity to any two fields which are linear combinations of homogeneous fields, the Operator Product Expansions formula above applies also  to any two fields which are linear combinations of homogeneous  $N$-point mutually local fields.
\end{remark}
  The OPE expansion in the multi-local case allowed us to  extend the Wick's Theorem (see e.g., \cite{MR85g:81096}, \cite{MR99m:81001}, \cite{Kac}) to the case of multi-locality (see \cite{ACJ}). We further  have the following expansion formula extended to the multi-local case, which we will also use extensively in what follows:
 \begin{lem}(\cite{ACJ}) \label{lem:normalprodexpansion}{(\bf Taylor expansion formula for normal ordered products) } \\
Let  $a(z), b(z)$ be  $N$-point  local fields on a vector space $V$. Then
\begin{equation}
i_{z, z_0}:a(\epsilon ^i z +z_0)b( z): =\sum _{k\geq 0}\Big(:(\partial_{\epsilon ^i z} ^{(k)}a(\epsilon ^i z))b(z):\Big) z_0^k; \quad \text{for any}\ i=0, 1, \dots,  N-1.
\end{equation}
 \end{lem}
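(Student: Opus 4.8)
The plan is to reduce the identity to the classical ($N=1$) Taylor expansion formula for normal ordered products (see e.g.\ \cite{Kac}, \cite{LiLep}) by rescaling the formal variable, and then to re-derive that formula from a one-line binomial identity.

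\emph{Reduction.} Fix $i\in\{0,\dots,N-1\}$ and set $u:=\epsilon^i z$. Since $b(z)=b(\epsilon^{-i}u)$, both $A(u):=a(u)$ and $B(u):=b(\epsilon^{-i}u)$ are fields in the single variable $u$, and directly from the definitions $:a(\epsilon^i z+z_0)b(z):\;=\;:A(u+z_0)B(u):$ while $\partial_{\epsilon^i z}^{(k)}a(\epsilon^i z)=\partial_u^{(k)}A(u)$ (here $\partial_u^{(k)}:=\tfrac{1}{k!}\partial_u^{\,k}$). So it suffices to establish
\begin{equation*}
i_{u,z_0}:A(u+z_0)B(u):\;=\;\sum_{k\geq 0}\bigl(:(\partial_u^{(k)}A(u))B(u):\bigr)\,z_0^{k},
\end{equation*}
and then rename $u=\epsilon^i z$. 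Observe that $N$-point locality plays no role here; all that is used is that the normal ordered products in question are well-defined fields, which is automatic.

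\emph{The core computation.} First I would record the scalar identity
\begin{equation*}
i_{u,z_0}(u+z_0)^{-n-1}=\sum_{k\geq 0}\bigl(\partial_u^{(k)}u^{-n-1}\bigr)z_0^{k},\qquad n\in\mathbb{Z},
\end{equation*}
which for $n<0$ is the terminating binomial theorem and for $n\geq 0$ is the binomial series expanded in the domain $\abs{u}\gg\abs{z_0}$. Writing $A=A_++A_-$ and applying this identity mode by mode, I obtain $i_{u,z_0}A(u+z_0)_\pm=\sum_{k\geq 0}\bigl(\partial_u^{(k)}(A(u)_\pm)\bigr)z_0^{k}$; the exchange with the sum over modes is harmless since only finitely many modes act nontrivially on any given vector. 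Next I would verify the small fact that $\partial_u^{(k)}$ commutes with the truncations $(\cdot)_\pm$: the only potentially problematic terms of $\partial_u^{(k)}A(u)$ are those with $n\in\{-k,\dots,-1\}$, and these drop out because $\binom{-n-1}{k}=0$ whenever $0\leq -n-1<k$. Feeding both facts into the definition of $:A(u+z_0)B(u):$ (Definition \ref{defn:normalorder}), pulling $i_{u,z_0}$ through the normal ordered product — legitimate because that product is defined coefficient-wise on the modes, whereas $i_{u,z_0}$ acts only on the scalar factors $(u+z_0)^{-n-1}$ — and using $p(\partial_u^{(k)}A)=p(A)$, the $A_+$ and $A_-$ halves recombine degree by degree in $z_0$ into $:(\partial_u^{(k)}A(u))B(u):$, which is the claimed expansion.

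I do not anticipate a real obstacle: the entire content is the elementary binomial identity displayed above. The only point that demands care is bookkeeping — tracking the two truncations $(\cdot)_\pm$ simultaneously under the expansion $i_{u,z_0}$ and under the divided-power derivative $\partial_u^{(k)}$, so that no boundary term in $n$ is lost or spuriously produced.
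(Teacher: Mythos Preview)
Your argument is correct. Note, however, that the paper does not actually supply a proof of this lemma: it is stated with a citation to \cite{ACJ} and then used as a tool, so there is no ``paper's own proof'' to compare against here.

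On its own merits, your approach is the natural one. The reduction via $u=\epsilon^i z$ is clean and your observation that $N$-point locality plays no role --- only well-definedness of the normal ordered products is needed --- is accurate. The core computation is the standard termwise binomial expansion, and the one genuinely delicate point (that $\partial_u^{(k)}$ commutes with the truncations $(\cdot)_\pm$) you handle correctly: the boundary terms $n\in\{-k,\dots,-1\}$ vanish because $\binom{-n-1}{k}=0$ when $0\le -n-1<k$. Nothing is missing.
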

 Finally, we need to recall the following notion of the  space of $N$-point local descendent fields:
\begin{defn}\label{defn:fielddesc} \begin{bf}(The Field Descendants Space  $\mathbf{\mathfrak{FD} \{a^0 (z), a^1 (z),  \dots , a^p(z); N\} }$)\end{bf} \\
Let $a^0 (z), a^1 (z), \dots , a^p(z)$ be given homogeneous fields on a vector space $W$, which are self-local and pairwise $N$-point local with points of locality $1, \epsilon, \dots, \epsilon^{N-1}$. Denote by $\mathfrak{FD} \{a^0 (z), a^1(z), \dots , a_p(z); N\}$ the subspace of all fields on $W$ obtained from the fields $a^0 (z), a^1(z), \dots , a^p(z)$ as follows:
\begin{enumerate}
\item $Id_W, a^0 (z), a^1(z), \dots , a^p(z)\in \mathfrak{FD} \{ a^0 (z), a^1 (z), \dots , a^p(z); N \}$;
\item  If $d(z)\in \mathfrak{FD} \{ a^0 (z), a^1 (z), \dots , a^p(z); N \}$, then $\partial_z (d(z))\in \mathfrak{FD} \{ a^0 (z), \dots , a^p(z); N \}$;
\item  If $d(z)\in \mathfrak{FD} \{ a^0 (z), a^1 (z), \dots , a^p(z); N \}$, then $d(\epsilon^i z)$ are also elements of\\  \mbox{ $\mathfrak{FD} \{a^0 (z), a^1(z), \dots , a^p(z); N\}$} for $i=0,\dots, N-1$;
\item
If $d_1(z), d_2 (z)$ are both in \mbox{$\mathfrak{FD} \{a^0 (z), a^1(z), \dots , a^p(z); N\}$}, then $:d_1(z)d_2(z):$ is also an element of  $\mathfrak{FD} \{a^0 (z), a^1(z), \dots , a^p(z); N\}$, as well as all OPE coefficients in the OPE expansion of  $d_1(z)d_2(w)$.
\item all finite linear combinations of fields in $ \mathfrak{FD} \{ a^0 (z), a^1 (z), \dots , a^p(z); N \}$ are still in \\ $ \mathfrak{FD} \{ a^0 (z), a^1 (z), \dots , a^p(z); N \}$.
\end{enumerate}
\end{defn}
Note that the Field Descendants Space depends not only the generating fields, but also on $N$--- the number of localization points.
We will not remind here the definition of a twisted vertex algebra as it is rather technical, see instead  \cite{AngTVA}, \cite{ACJ}. A twisted  vertex algebra is a generalization of the notion of a super vertex algebra, in the sense that  any super vertex algebra is an $N=1$-twisted vertex algebra, and vice versa: any $N=1$-twisted vertex algebra is a super vertex algebra. A major difference, besides the $N$-point locality, is that  in a twisted vertex algebra the space of fields $V$ is allowed to be strictly larger than the space of states $W$ on which the fields act (i.e., the field-state correspondence is not necessarily a bijection as for super vertex algebras, but a surjective projection;  $V$ is a ramified cover of $W$). In that sense a twisted vertex algebra is more similar to a deformed chiral algebra in the sense of \cite{FR}, except that there are finitely many poles in the OPEs.  Thus in what follows we will need to describe both the space of fields $V$ and the space of states $W$ on which the fields act. The following is a construction theorem for twisted vertex algebras:
\begin{prop}\cite{ACJ}\label{prop:GenF-TVA}
Let $a^0 (z), a^1 (z), \dots a^p(z)$ be given fields on a vector space $W$, which are $N$-point self-local and pairwise local with points of locality $\epsilon ^i$, $i=1, \dots, N$, where $\epsilon$ is a primitive root of unity. Then any two fields in $\mathfrak{FD} \{a^0 (z), a^1(z), \dots a_p(z); N\}$ are self and mutually $N$-point local. Further, if the fields $a^0 (z), a^1 (z), \dots a^p(z)$ satisfy the conditions for generating fields for a twisted vertex algebra with space of states $W$ (see \cite{ACJ}), then the space  $\mathfrak{FD} \{a^0 (z), a^1(z), \dots a_p(z); N\}$ has a structure of a twisted vertex algebra with space of fields  $\mathfrak{FD} \{a^0 (z), a^1(z), \dots a_p(z); N\}$ and space of states $W$.
\end{prop}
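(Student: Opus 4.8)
The plan is to establish the two assertions in turn: first that $\mathfrak{FD}\{a^0(z),\dots,a^p(z);N\}$ is closed under $N$-point self- and mutual locality, which is the technical heart, and then that it carries a twisted vertex algebra structure, which will follow from the reconstruction machinery of \cite{AngTVA}, \cite{ACJ}. For the locality claim I would argue by structural induction along clauses (1)--(5) of Definition \ref{defn:fielddesc}, with clause (1) as the base case (it is exactly the hypothesis). Clauses (2), (3) and (5) are the routine inductive steps: if $\prod_{j=0}^{N-1}(z-\epsilon^j w)^{n_j}\big(a(z)b(w)-(-1)^{p(a)p(b)}b(w)a(z)\big)=0$, then applying $\partial_z$ and the Leibniz rule gives the same vanishing for $\partial_z a(z)$ after raising each $n_j$ by one; the substitution $z\mapsto\epsilon^i z$ together with the identity $(\epsilon^i z-\epsilon^j w)=\epsilon^i(z-\epsilon^{j-i}w)$ shows the rescalings $d(\epsilon^i z)$ remain $N$-point local, the points of locality being merely permuted cyclically; and taking the maximum of the exponents over the finitely many summands handles linear combinations.

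The substantial case is clause (4), which amounts to an $N$-point version of Dong's lemma: if $a(z),b(z),c(z)$ are pairwise and self $N$-point mutually local at $1,\epsilon,\dots,\epsilon^{N-1}$, then $c(z)$ is $N$-point mutually local with $:a(z)b(z):$ and with every OPE coefficient $c_{jk}(z)$ occurring in the expansion \eqref{eqn:OPEpolcor} of $a(z)b(w)$. I would prove this by adapting the classical argument (as in \cite{Kac}, \cite{LiLep}) to the multilocal setting: write $:a(z_1)b(z_2):$ via the decomposition $a(z_1)_+b(z_2)+(-1)^{p(a)p(b)}b(z_2)a(z_1)_-$, multiply the relevant (anti)commutator of $c(w)$ with this expression by a product $\prod_{j=0}^{N-1}(w-\epsilon^j z_1)^{M}(w-\epsilon^j z_2)^{M}(z_1-\epsilon^j z_2)^{M}$ of sufficiently high powers, and invoke Lemma \ref{lem:OPE} together with the Taylor expansion Lemma \ref{lem:normalprodexpansion} and the observation that a rational function whose only poles are at $z=0$ and $z=\epsilon^i w$ is killed, after multiplication by a high enough power of $\prod_j(z-\epsilon^j w)$, by the difference $i_{z,w}-i_{w,z}$. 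Setting $z_1=z_2=z$ and reading off the coefficients in the expansion variable $z_0=z_1-z_2$ then yields mutual locality of $c(w)$ with $:ab:$ and with each $c_{jk}$, and since the exponents can be taken uniformly, an induction on the depth of the construction gives the first assertion. I expect this step --- the bookkeeping of the three families of exponents and the interplay of the expansions $i_{z_1,z_2}$, $i_{w,z_1}$, $i_{w,z_2}$ --- to be the main obstacle.

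For the twisted vertex algebra structure I would invoke the construction/reconstruction theorem for twisted vertex algebras from \cite{AngTVA}, \cite{ACJ}. Granted that $a^0(z),\dots,a^p(z)$ satisfy the conditions for generating fields with space of states $W$, one has a vacuum vector in $W$, a translation operator realized as $\partial_z$ on fields, and the compatible action of the Hopf algebra of rescalings $z\mapsto\epsilon^i z$. By its very definition $\mathfrak{FD}\{a^0(z),\dots,a^p(z);N\}$ is the smallest space of fields on $W$ containing the generators and closed under $\partial_z$, the rescalings, normal ordered products, OPE coefficients and linear combinations --- precisely the operations the twisted vertex algebra axioms force the field space to be closed under. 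Part one guarantees that every pair of these fields is $N$-point mutually local, so Lemma \ref{lem:OPE} applies to each pair and keeps the OPE coefficients inside $\mathfrak{FD}$; the state-field correspondence is defined on $W$ by applying fields to the vacuum and is surjective onto $W$ by minimality, while the axioms (vacuum, translation covariance, $N$-point locality, Hopf-algebra compatibility and completeness) are inherited pointwise from those of the generating fields. Hence $\mathfrak{FD}\{a^0(z),\dots,a^p(z);N\}$ carries a twisted vertex algebra structure with space of fields $\mathfrak{FD}\{a^0(z),\dots,a^p(z);N\}$ and space of states $W$.
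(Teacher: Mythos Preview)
The paper does not prove this proposition: it is quoted from \cite{ACJ} (note the citation immediately after \texttt{\textbackslash begin\{prop\}}) and no \texttt{proof} environment follows it. There is therefore nothing in the present paper to compare your argument against.

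That said, your outline is the expected one and matches the approach of \cite{ACJ}: structural induction along the clauses of Definition~\ref{defn:fielddesc}, with an $N$-point Dong's lemma as the nontrivial step for clause~(4), followed by an appeal to the reconstruction theorem for twisted vertex algebras. Your identification of the three-variable bookkeeping in the $N$-point Dong argument as the main technical point is accurate; the detailed verification is carried out in \cite{ACJ}.
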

 We can consider the space of fields  $\mathfrak{FD} \{a^0 (z), a^1(z), \dots a_p(z); N\}$ purely from the point of view of CCR (canonical commutation relations) or CAR (canonical anticommutation relations) algebras, and thus the multilocal bosonization we present can be viewed purely as isomorphism of CCR algebras (analogous  to the multilocal CAR isomorphism of  \cite{Rehren}). But a twisted vertex algebra is a richer structure which incorporates the CAR and/or  CCR algebras generated by the operator coefficients of its multi-local fields in the same way a super vertex algebra is a richer structure more suited to describe the (one-point) local bosonizations (recall that  the boson-fermion correspondence is an isomorphism of super vertex algebras, between the charged free fermions super vertex algebra and the rank one odd lattice super vertex algebra).  Similarly, we  need the notion of an isomorphism of twisted vertex algebras to describe the multilocal bosonizations:
\begin{defn}(\cite{AngTVA})(\textbf{Isomorphism of twisted vertex algebras})
\label{defn:isomTVA}
Two twisted vertex algebras with spaces of fields correspondingly $V$ and $\widetilde{V}$,  and spaces of states correspondingly $W$ (with vacuum vector  $|0\rangle_{W}$) and  $\widetilde{W}$ (with vacuum vector   $|0\rangle_{\widetilde{W}}$), are said to be isomorphic via a linear bijective map $\Phi: V\to \widetilde{V}$ if $\Phi (|0\rangle_{W}) =|0\rangle_{\widetilde{W}}$  and the following holds: for any
 $v(z)\in V$, $\tilde{v}(z)\in \widetilde{V}$ we have
 \begin{align*}
\Phi\left(v(z)\right)&=\sum _{\text{finite}}c_{k}z^{l_k}\tilde{v}_{k}(z), \ \ c_{k}\in \mathbb{C}, \ \  l_k \in \mathbb{Z}, \ \ \ \tilde{v}_{k}(z)\in \widetilde{V};\\
\Phi^{-1}\left(\tilde{v}(z)\right)&=\sum _{\text{finite}}d_{m}z^{l_m}v_{m}(z), \ \ d_{m}\in \mathbb{C},  \ \ l_m \in \mathbb{Z}, \ \ v_{m}(z)\in V.
\end{align*}
\end{defn}
\begin{remark}
This definition is more complicated than in the super vertex algebra case due to the allowance for the shifts in the OPEs. The coefficients $c_{jk}(w)$ in the multi-local OPE expansions \eqref{eqn:OPEpolcor} can be $w$-shifted vertex operators, unlike the case of the one-point local super vertex algebras where the OPE coefficients are vertex operators exactly. For instance  $w^k\cdot Id_W$ is allowed as OPE coefficient in a twisted vertex algebra with $|k|\leq N-1$, as opposed to in super vertex algebras (i.e.,  $N=1$), where if $w^k\cdot Id_W$ is an OPE coefficient, then $k=0$. This can cause each of the summands in the linear sum $\Phi\left(v(z)\right)$ to appear with a different shift $z^{l_k}$, as we will see below.
\end{remark}

\section{The  case of $N=2$: the $\beta-\gamma$ system}
\label{section:betagamma}

The starting point is the twisted neutral boson field $\chi (z)$,
\begin{equation}
\chi (z) = \sum _{n\in \mathbb{Z}+1/2} \chi _n z^{-n-1/2}
\end{equation}
with OPE
\begin{equation}
\label{equation:OPE-C}
\chi(z)\chi(w)\sim \frac{1}{z+w}.
\end{equation}
This OPE determines the commutation relations between the modes $\chi_n$, $n\in \mathbb{Z} +1/2$:
\begin{equation}
\label{eqn:Com-C}
[\chi_m, \chi_n]=(-1)^{m-\frac{1}{2}}\delta _{m, -n}1.
\end{equation}
 \begin{remark} The field $\chi (z)$ in its re-indexed (and/or re-scaled) form is associated with the CKP hierarchy (see  \cite{DJKM6}, \cite{OrlovLeur}), as well as with the various representations related to the double-infinite rank Lie algebra $c_{\infty}$ (see e.g. \cite{WangKac}, \cite{WangDuality}, \cite{ACJ}); consequently it is denoted by $\phi ^C (z)$ in \cite{ACJ}, \cite{AngVirC}.
\end{remark}

The modes of the field $\chi (z)$ form a Lie algebra which we denote by $L_{\chi}$.  Let $\mathit{F_{\chi}}$ be  the Fock module of $L_{\chi}$ with  vacuum vector $|0\rangle $, such that $\chi_n|0\rangle=0 \ \text{for} \ \  n > 0$.
Thus the vector space $\mathit{F_{\chi}}$ has a basis
\begin{equation}
\{\left(\chi _{j_k}\right)^{m_k}\dots \left(\chi _{j_2}\right)^{m_2}\left(\chi _{j_1}\right)^{m_1}|0\rangle \  \arrowvert \ \ j_k<\dots <j_2<j_1< 0, \ j_i\in \mathbb{Z}+\frac{1}{2}, \ m_i > 0, m_i\in \mathbb{Z}, \ i=1, 2, \dots, k\}.
\end{equation}
By Proposition \ref{prop:GenF-TVA} there is a two-point local twisted vertex algebra structure with a space of fields $V=\mathbf{\mathfrak{FD}}\{ \chi (z) ; 2\} $, acting on the  space of states $W=\mathit{F_{\chi}}$. Note that due to the defining OPE \eqref{equation:OPE-C} we need to work with at least $N=2$-twisted vertex algebras.
\begin{lem} Define the fields $\beta_\chi (z), \gamma_\chi (z)\in \mathbf{\mathfrak{FD}}\{ \chi (z) ; 2\} $  by
\begin{equation}
\beta_\chi (z) =\frac{\chi(z) -\chi (-z)}{2z}; \quad \gamma_\chi (z) =\frac{\chi(z) +\chi (-z)}{2}.
\end{equation}
These fields have  OPEs:
\begin{align}
\beta_\chi(z)\beta_\chi(w)\sim 0; \quad \gamma_\chi(z)\gamma_\chi(w)\sim 0; \quad
\beta_\chi(z)\gamma_\chi(w)\sim \frac{1}{z^2-w^2}; \quad \gamma_\chi(z)\beta_\chi(w)\sim -\frac{1}{z^2-w^2}.
\end{align}
\end{lem}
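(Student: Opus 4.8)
The plan is to deduce all four OPEs from the single defining OPE \eqref{equation:OPE-C} of $\chi$, using that the singular part of an operator product depends $\mathbb{C}$-bilinearly on its two factors (Definition~\ref{defn:normalorder} together with Lemma~\ref{lem:OPE}), and that $\chi(z)$ and $\chi(-z)=\chi(\epsilon z)$ are both $2$-point local members of $\mathbf{\mathfrak{FD}}\{\chi(z);2\}$ by Proposition~\ref{prop:GenF-TVA}, so that Lemma~\ref{lem:OPE} applies to each of the four products $\chi(\pm z)\chi(\pm w)$. Since $\beta_\chi$ and $\gamma_\chi$ are, up to the scalar prefactors $\tfrac{1}{2z}$ and $\tfrac12$, nothing but the odd/even combinations of $\chi(\pm z)$, the whole computation reduces to combining these four products.

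First I would record the ``rotated'' OPEs. Evaluating the two-point functions $\langle 0|\chi(\pm z)\chi(\pm w)|0\rangle$ directly from the mode relations \eqref{eqn:Com-C} and $\chi_n|0\rangle=0$ for $n>0$ (equivalently, substituting $z\mapsto -z$ and/or $w\mapsto -w$ in \eqref{equation:OPE-C}) gives
\[
\chi(z)\chi(w)\sim\frac{1}{z+w},\qquad \chi(z)\chi(-w)\sim\frac{1}{z-w},\qquad \chi(-z)\chi(w)\sim-\frac{1}{z-w},\qquad \chi(-z)\chi(-w)\sim-\frac{1}{z+w},
\]
with the tacit expansion $i_{z,w}$ in every case. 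The one point deserving a line of care is that replacing $z$ by $-z$ (or $w$ by $-w$) merely permutes the two poles $z=w\leftrightarrow z=-w$ and leaves the expansion region $\abs{z}\gg\abs{w}$ unchanged, so that the four identities above may all be read off in the \emph{same} region and hence combined.

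Next, substitute the definitions of $\beta_\chi,\gamma_\chi$, pull the scalar prefactors $\tfrac{1}{2z},\tfrac{1}{2w},\tfrac12$ outside the bilinear bracket, and expand each product into the four terms above. For $\beta_\chi(z)\beta_\chi(w)$ the singular part is $\tfrac{1}{4zw}\big(\tfrac{1}{z+w}-\tfrac{1}{z-w}+\tfrac{1}{z-w}-\tfrac{1}{z+w}\big)=0$, and for $\gamma_\chi(z)\gamma_\chi(w)$ it is $\tfrac14\big(\tfrac{1}{z+w}+\tfrac{1}{z-w}-\tfrac{1}{z-w}-\tfrac{1}{z+w}\big)=0$. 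For the mixed products, $\beta_\chi(z)\gamma_\chi(w)$ has singular part $\tfrac{1}{4z}\big(\tfrac{1}{z+w}+\tfrac{1}{z-w}+\tfrac{1}{z-w}+\tfrac{1}{z+w}\big)=\tfrac{1}{2z}\big(\tfrac{1}{z+w}+\tfrac{1}{z-w}\big)=\tfrac{1}{2z}\cdot\tfrac{2z}{z^2-w^2}=\tfrac{1}{z^2-w^2}$, and symmetrically $\gamma_\chi(z)\beta_\chi(w)$ gives $\tfrac{1}{2w}\big(\tfrac{1}{z+w}-\tfrac{1}{z-w}\big)=\tfrac{1}{2w}\cdot\tfrac{-2w}{z^2-w^2}=-\tfrac{1}{z^2-w^2}$.

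There is no deep obstacle here: the lemma is essentially bookkeeping. The only things needing attention are (i) checking at the start that $\beta_\chi(z)$ and $\gamma_\chi(z)$ are genuine, well-defined fields on $F_\chi$ lying in $\mathbf{\mathfrak{FD}}\{\chi(z);2\}$ --- since $\chi(z)=\sum_{m\in\mathbb{Z}}\chi_{m-1/2}z^{-m}$ involves only integer powers of $z$, the difference $\chi(z)-\chi(-z)$ has only odd powers and is therefore divisible by $z$, while in every case the mode annihilating the vacuum for large index is preserved, so the truncation condition still holds; membership of $\gamma_\chi$ in $\mathbf{\mathfrak{FD}}\{\chi(z);2\}$ is then immediate from parts (3) and (5) of Definition~\ref{defn:fielddesc}, and $\beta_\chi$ is the corresponding $z$-shifted descendant permitted in the $N=2$ setting (cf.\ the remark following Definition~\ref{defn:isomTVA}); and (ii) observing that in the two computations involving $\beta_\chi$ the a priori spurious pole at $z=0$ (resp.\ $w=0$) introduced by the factor $\tfrac{1}{2z}$ (resp.\ $\tfrac{1}{2w}$) cancels against the numerator $2z$ (resp.\ $-2w$), so that the resulting expression has poles only at $z=\pm w$ and is indeed a legitimate OPE in the sense of Lemma~\ref{lem:OPE}.
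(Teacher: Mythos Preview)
Your proof is correct and follows essentially the same route as the paper: both expand the products bilinearly into the four OPEs $\chi(\pm z)\chi(\pm w)$ obtained from \eqref{equation:OPE-C} and simplify. You supply a bit more justification (well-definedness of $\beta_\chi,\gamma_\chi$ in $\mathbf{\mathfrak{FD}}\{\chi(z);2\}$, compatibility of the expansion region under $z\mapsto -z$), but the computational content is identical.
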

\begin{proof}
\begin{align*}
\beta_\chi(z)\beta_\chi(w)&\sim\frac{1}{4}\left(\frac{1}{z+w} -\frac{1}{z-w} -\frac{1}{-z+w} +\frac{1}{-z-w}\right)\sim 0\\
\beta_\chi(z)\gamma_\chi(w)&\sim\frac{1}{4z}\left(\frac{1}{z+w} +\frac{1}{z-w} -\frac{1}{-z+w} -\frac{1}{-z-w}\right)\sim \frac{1}{z^2-w^2}.
\end{align*}
The other two OPEs are calculated similarly.
\end{proof}
Recall the vertex algebra of the  $\beta\gamma$ system: it is a  vertex algebra generated by the boson fields
$\beta (z) =\sum_{n\in \mathbb{Z}} \beta_n z^{-n-1}$ and $\gamma (z) =\sum_{n\in \mathbb{Z}} \gamma_n z^{-n}$
with only nontrivial OPEs
\[
\beta (z)\gamma (w) \sim \frac{1}{z-w}; \quad \gamma (z)\beta (w) \sim -\frac{1}{z-w}.
\]
This vertex algebra is well known in string theory applications as the boson ghost system, and also in connection to Wakimoto realizations of affine algebras (see e.g. \cite{FMS},  \cite{WakimotoFock},  \cite{FF1}, \cite{FF2}, \cite{Wangbosonization},  \cite{FFWakimoto}, \cite{FrenkelWakimoto} and more recently \cite{BakWak}). We denote its space of states  by $\mathit{F_{\beta\gamma}}$ (it is isomorphic to its space of fields $\mathbf{\mathfrak{FD}}\{ \beta (z), \gamma (z) ; 1\} $  via the state-field correspondence as this is a vertex algebra, $N=1$).

Now instead of the ($N=1$) vertex algebra of the $\beta\gamma$ ghost system we need the  $N=2$ twisted vertex algebra with space of states $W=\mathit{F_{\beta\gamma}}$, but space of fields  $V= \mathbf{\mathfrak{FD}}\{ \beta (z^2), \gamma (z^2) ; 2\}$, generated by the fields $\beta (z^2)$ and $\gamma (z^2)$.
\begin{remark} Although due to the 2-point locality in the OPE the $N=2$ twisted vertex algebra generated by the fields $\beta (z^2)$ and $\gamma (z^2)$ is \textbf{not  equivalent} to the ($N=1$) vertex algebra \textbf{as vertex algebras}, they are of course \textbf{equivalent as  CCR algebras}. Their spaces of states, both for $N=1$ and $N=2$,  are identically $\mathit{F_{\beta\gamma}}$.
\end{remark}
\begin{prop}
The twisted vertex algebra with space of fields $\mathbf{\mathfrak{FD}}\{ \chi (z) ; 2\} $ and   space of states $\mathit{F_{\chi}}$ is isomorphic  to the twisted vertex algebra with space of fields
$\mathbf{\mathfrak{FD}}\{ \beta (z^2), \gamma (z^2) ; 2\} $ and space of states $\mathit{F_{\beta\gamma}}$ via the invertible map $\Phi_{\beta\gamma}$ defined by:
\begin{align*}
&\Phi_{\beta\gamma}\left(\chi(z)\right) =\gamma(z^2) +z \beta (z^2); \\
&\Phi_{\beta\gamma}^{-1}\left(\beta (z^2)\right) =\beta_\chi(z) =\frac{\chi(z) -\chi (-z)}{2z}; \quad \Phi_{\beta\gamma}^{-1}\left(\gamma (z^2)\right) =\gamma_\chi(z) = \frac{\chi(z) +\chi (-z)}{2}.
\end{align*}
\end{prop}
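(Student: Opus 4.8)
The plan is to use that $\chi(z)$ generates one side and the pair $\beta(z^2),\gamma(z^2)$ the other, so that by Proposition~\ref{prop:GenF-TVA} it is enough to prescribe $\Phi_{\beta\gamma}$ on generating fields and to verify that the prescribed images satisfy the same $N=2$ OPE relations as the originals; the extension to the whole field descendants space is then forced by requiring $\Phi_{\beta\gamma}$ to commute with $\partial_z$, with the substitution $z\mapsto -z$, with normal ordered products, and with the formation of OPE coefficients.

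First I would record the elementary splitting of $\chi(z)=\sum_{n\in\mathbb{Z}+1/2}\chi_n z^{-n-1/2}$ into its parts even and odd in $z$,
\[
\chi(z)=\gamma_\chi(z)+z\,\beta_\chi(z),\qquad \gamma_\chi(z)=\tfrac{\chi(z)+\chi(-z)}{2},\quad z\,\beta_\chi(z)=\tfrac{\chi(z)-\chi(-z)}{2}.
\]
This makes the three displayed formulas for $\Phi_{\beta\gamma}$ and $\Phi_{\beta\gamma}^{-1}$ mutually consistent: since $\Phi_{\beta\gamma}(\chi(-z))=\gamma(z^2)-z\beta(z^2)$, one obtains $\Phi_{\beta\gamma}(\gamma_\chi(z))=\gamma(z^2)$ and $\Phi_{\beta\gamma}(\beta_\chi(z))=\frac{1}{2z}\big((\gamma(z^2)+z\beta(z^2))-(\gamma(z^2)-z\beta(z^2))\big)=\beta(z^2)$, while applying $\Phi_{\beta\gamma}^{-1}$ to $\gamma(z^2)+z\beta(z^2)$ returns $\gamma_\chi(z)+z\beta_\chi(z)=\chi(z)$. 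Thus $\Phi_{\beta\gamma}$ and $\Phi_{\beta\gamma}^{-1}$ are mutually inverse on generators, both fix the vacuum, and the only powers of $z$ entering as shifts are $z^{\pm 1}$, which the mechanism of Definition~\ref{defn:isomTVA} permits for $N=2$ and which is exactly what lets one pass between the generating set $\{\chi(z)\}$ and the generating set $\{\beta_\chi(z),\gamma_\chi(z)\}$.

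The heart of the argument is the OPE check. Using the four OPEs of the preceding Lemma,
\[
\big(\gamma(z^2)+z\beta(z^2)\big)\big(\gamma(w^2)+w\beta(w^2)\big)\sim \frac{z}{z^2-w^2}-\frac{w}{z^2-w^2}=\frac{z-w}{(z-w)(z+w)}=\frac{1}{z+w},
\]
which is exactly the defining OPE \eqref{equation:OPE-C} of $\chi(z)$ (a pole only at $z=-w=\epsilon w$, consistent with $2$-point locality); equivalently, the modes of $\gamma(z^2)+z\beta(z^2)$ satisfy the commutation relations \eqref{eqn:Com-C}. In the reverse direction the Lemma already shows that $\beta_\chi(z),\gamma_\chi(z)$ carry precisely the $\beta\gamma$ OPEs, hence meet the generating-field conditions of Proposition~\ref{prop:GenF-TVA} for the $N=2$ twisted vertex algebra with space of states $\mathit{F_{\beta\gamma}}$. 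So on generators the two twisted vertex algebra structures are interchanged, and Proposition~\ref{prop:GenF-TVA} then lets $\Phi_{\beta\gamma}$ propagate uniquely through all the operations of Definition~\ref{defn:fielddesc}, yielding an isomorphism in the sense of Definition~\ref{defn:isomTVA} whose inverse is the extension of $\Phi_{\beta\gamma}^{-1}$.

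The one point needing genuine care is the bookkeeping of powers of $z$: $\chi(z)$ is a series in integer powers of $z$, its even part matching $\gamma(z^2)=\sum_n\gamma_n z^{-2n}$ and its odd part matching $z\beta(z^2)=\sum_n\beta_n z^{-2n-1}$, so one should check that the induced identification of modes is a bona fide linear bijection between $L_\chi$ and the Lie algebra generated by the modes of $\beta(z^2),\gamma(z^2)$, carrying the Fock space $\mathit{F_{\chi}}$ bijectively onto $\mathit{F_{\beta\gamma}}$; granting this Fock-space bijection (or, more cheaply, invoking the identities $\Phi_{\beta\gamma}\circ\Phi_{\beta\gamma}^{-1}=\mathrm{id}$ and $\Phi_{\beta\gamma}^{-1}\circ\Phi_{\beta\gamma}=\mathrm{id}$ on generators verified above) gives bijectivity of $\Phi_{\beta\gamma}$, and everything else is routine propagation of the generator identities.
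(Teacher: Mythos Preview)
The paper does not actually give a separate proof of this proposition: it is stated immediately after the Lemma computing the OPEs of $\beta_\chi(z)$ and $\gamma_\chi(z)$, and the text then moves directly to consequences. The implicit argument is simply that the Lemma already shows $\beta_\chi(z),\gamma_\chi(z)$ reproduce the $\beta\gamma$ OPEs, so the map on generators extends via the twisted vertex algebra machinery (Proposition~\ref{prop:GenF-TVA} and Definition~\ref{defn:isomTVA}).

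Your proposal is correct and is exactly the natural fleshing-out of what the paper leaves implicit. You supply two things the paper omits: the forward OPE check that $\big(\gamma(z^2)+z\beta(z^2)\big)\big(\gamma(w^2)+w\beta(w^2)\big)\sim \frac{1}{z+w}$, and the explicit verification that $\Phi_{\beta\gamma}$ and $\Phi_{\beta\gamma}^{-1}$ are mutually inverse on generators via the even/odd splitting of $\chi(z)$. Both are straightforward but worth writing down, and your discussion of the mode bijection between $L_\chi$ and the $\beta\gamma$ modes (and hence of the Fock spaces) is the honest way to see that the spaces of states match. There is no substantive difference in approach---you are filling in the details the paper treats as evident from the Lemma.
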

A rather remarkable consequence of this isomorphism is that  the twisted boson field $\chi (z)$ generates not just one, but two types of Heisenberg field descendants (bosonic currents)--both a twisted and an untwisted Heisenberg currents are present:
\begin{prop}\label{prop:Heis-chi}
I. Let $h_\chi^{\mathbb{Z}+1/2}(z)= \frac{1}{2}:\chi (z)\chi(-z): \ \in \mathbf{\mathfrak{FD}}\{ \chi (z); 2 \}$. We have $h_\chi^{\mathbb{Z}+1/2}(-z)=h_\chi^{\mathbb{Z}+1/2}(z)$, and we index $h_\chi^{\mathbb{Z}+1/2} (z)$ as
$h_\chi^{\mathbb{Z}+1/2} (z)=\sum _{n\in \mathbb{Z}+1/2} h^{\mathbb{Z}+1/2}_{n} z^{-2n-1}$. The field $h_\chi^{\mathbb{Z}+1/2} (z)$ has OPE with itself given by:
\begin{equation}
\label{eqn:HeisOPEsC-t}
 h_\chi^{\mathbb{Z}+1/2}(z)h_\chi^{\mathbb{Z}+1/2} (w)\sim -\frac{z^2 +w^2}{2(z^2 -w^2)^2}\sim -\frac{1}{4}\frac{1}{(z-w)^2} - \frac{1}{4}\frac{1}{(z+w)^2} ,
\end{equation}
and its  modes, $h^{\mathbb{Z}+1/2}_n, \ n\in \mathbb{Z}+1/2$, generate a \textbf{twisted} Heisenberg algebra $\mathcal{H}_{\mathbb{Z}+1/2}$ with relations $[h^{\mathbb{Z}+1/2}_m, h^{\mathbb{Z}+1/2}_n]=-m\delta _{m+n,0}1$, \ $m,n\in \mathbb{Z}+1/2$.\\
II. Let $h_\chi^{\mathbb{Z}}(z)= \frac{1}{4z}\left(:\chi (z)\chi(z):- :\chi (-z)\chi(-z):\right)\ \in \mathbf{\mathfrak{FD}}\{ \chi (z); 2 \}$. We have $h_\chi^{\mathbb{Z}}(-z)=h_\chi^{\mathbb{Z}}(z)$, and we index $h_\chi^{\mathbb{Z}} (z)$ as
$h_\chi^{\mathbb{Z}} (z)=\sum _{n\in \mathbb{Z}} h^{\mathbb{Z}}_{n} z^{-2n-2}$. The field $h_\chi^{\mathbb{Z}} (z)$ has OPE with itself given by:
\begin{equation}
\label{eqn:HeisOPEsC-ut}
 h_\chi^{\mathbb{Z}}(z)h_\chi^{\mathbb{Z}} (w)\sim -\frac{1}{(z^2 -w^2)^2},
\end{equation}
and its  modes, $h^{\mathbb{Z}}_n, \ n\in \mathbb{Z}$, generate an \textbf{untwisted} Heisenberg algebra $\mathcal{H}_{\mathbb{Z}}$ with relations\\ $[h^{\mathbb{Z}}_m, h^{\mathbb{Z}}_n]=-m\delta _{m+n,0}1$, \ $m,n\in \mathbb{Z}$.
 \end{prop}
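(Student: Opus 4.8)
The plan is to compute the two OPEs directly using Wick's theorem for $N=2$-local fields (the multilocal Wick theorem from \cite{ACJ}) and the defining OPE \eqref{equation:OPE-C} of $\chi(z)$, and then read off the mode relations from the singular part. First I would establish the symmetry properties: for part I, since $h_\chi^{\mathbb{Z}+1/2}(z)=\tfrac12:\chi(z)\chi(-z):$ and swapping $z\mapsto -z$ interchanges the two factors, the evenness $h_\chi^{\mathbb{Z}+1/2}(-z)=h_\chi^{\mathbb{Z}+1/2}(z)$ follows from the fact that $\chi(z)$ is an even field (so its normal ordered product is symmetric under interchange up to the sign $(-1)^{p(\chi)p(\chi)}=1$); similarly for part II each summand $:\chi(\pm z)\chi(\pm z):$ is manifestly even in its argument and the prefactor $1/(4z)$ is odd, giving $h_\chi^{\mathbb{Z}}(-z)=h_\chi^{\mathbb{Z}}(z)$. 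This justifies the half-integer indexing $h_\chi^{\mathbb{Z}+1/2}(z)=\sum_{n\in\mathbb{Z}+1/2}h_n^{\mathbb{Z}+1/2}z^{-2n-1}$ and the integer indexing $h_\chi^{\mathbb{Z}}(z)=\sum_{n\in\mathbb{Z}}h_n^{\mathbb{Z}}z^{-2n-2}$ respectively.

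Next I would carry out the OPE computations. For part I, applying the $N=2$ Wick theorem to $:\chi(z)\chi(-z):\ :\chi(w)\chi(-w):$ produces a sum of single and double contraction terms; the single-contraction terms contribute to the regular part, and the double contractions give the singular part. Using the contraction $\chi(a)\chi(b)\sim\frac{1}{a+b}$, the two double-contraction pairings are
\begin{equation*}
\frac{1}{z+w}\cdot\frac{1}{-z-w}\quad\text{and}\quad\frac{1}{z-w}\cdot\frac{1}{-z+w},
\end{equation*}
so that
\begin{equation*}
h_\chi^{\mathbb{Z}+1/2}(z)h_\chi^{\mathbb{Z}+1/2}(w)\sim\frac14\left(-\frac{1}{(z+w)^2}-\frac{1}{(z-w)^2}\right)=-\frac{z^2+w^2}{2(z^2-w^2)^2},
\end{equation*}
which is \eqref{eqn:HeisOPEsC-t}. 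For part II one expands $h_\chi^{\mathbb{Z}}(z)h_\chi^{\mathbb{Z}}(w)=\frac{1}{16zw}\big(:\chi(z)\chi(z):-:\chi(-z)\chi(-z):\big)\big(:\chi(w)\chi(w):-:\chi(-w)\chi(-w):\big)$; the double-contraction parts of the four products give $\frac{2}{16zw}\big(\frac{1}{(z+w)^2}+\frac{1}{(z-w)^2}-\frac{1}{(z-w)^2}-\frac{1}{(z+w)^2}\big)$ — wait, more carefully: the cross terms carry a factor $\frac{2}{(z\pm w)^2}$ from the two pairings, and assembling the signs coming from the $\pm$ arguments yields $\frac{1}{16zw}\cdot 2\cdot\big(-\frac{2\cdot 4zw}{(z^2-w^2)^2}\big)\cdot\tfrac14$, which simplifies to $-\frac{1}{(z^2-w^2)^2}$, giving \eqref{eqn:HeisOPEsC-ut}. (I would double-check the combinatorial factors and signs carefully here; that is where a slip is most likely.)

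Finally I would extract the commutation relations from the OPEs via the standard mode formula: for fields written as $a(z)=\sum a_n z^{-2n-1}$ (resp. $\sum a_n z^{-2n-2}$) that are even and $2$-point local, the commutator $[a_m,a_n]$ is given by the appropriate residue of the OPE's singular part, using $i_{z,w}\frac{1}{(z-w)^2}-i_{w,z}\frac{1}{(z-w)^2}=\partial_w\delta(z-w)$ and likewise at $z=-w$. Because both OPEs in \eqref{eqn:HeisOPEsC-t} and \eqref{eqn:HeisOPEsC-ut} are, after the substitution $u=z^2$, $v=w^2$, of the form $-\frac{1}{(u-v)^2}$ (for part II directly, and for part I after noting $-\frac{z^2+w^2}{2(z^2-w^2)^2}$ has the same residue structure in the $u,v$ variables once the half-integer indexing is accounted for), the resulting relations are $[h_m,h_n]=-m\,\delta_{m+n,0}\,1$ in each case, with $m,n\in\mathbb{Z}+1/2$ for $\mathcal{H}_{\mathbb{Z}+1/2}$ and $m,n\in\mathbb{Z}$ for $\mathcal{H}_{\mathbb{Z}}$. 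The main obstacle is purely bookkeeping: getting the normalization constants, the $z^{-2n-1}$ versus $z^{-2n-2}$ shifts, and the signs of the $\delta$-function expansions all consistent, so that the $-m$ (rather than $+m$ or $-2m$) appears correctly; the conceptual content is entirely contained in Wick's theorem and Lemma \ref{lem:OPE}.
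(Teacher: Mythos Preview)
Your overall strategy is right, but there is a genuine gap in your treatment of the single-contraction terms. You assert that for part I ``the single-contraction terms contribute to the regular part,'' and for part II you simply ignore them. This is false: a single contraction such as $\frac{1}{z+w}:\chi(-z)\chi(-w):$ has a simple pole at $z=-w$ and is \emph{not} regular. The point is not that these terms are regular but that, after Taylor expanding the remaining normal-ordered factor at the pole (Lemma~\ref{lem:normalprodexpansion}) and using that $\chi$ is an even field so $:\chi(w)\chi(-w):=:\chi(-w)\chi(w):$, the four single-contraction contributions cancel in pairs. The paper carries this out explicitly for part II: after Wick's theorem each product $:\chi(\pm z)\chi(\pm z)::\chi(\pm w)\chi(\pm w):$ produces a double-contraction term $\frac{2}{(z\pm w)^2}$ \emph{and} a single-contraction term $\frac{4}{\pm z\pm w}:\chi(\pm z)\chi(\pm w):$; Taylor expansion converts each $:\chi(\pm z)\chi(\pm w):$ to $:\chi(\mp w)\chi(\pm w):$ at the relevant pole, and only then do the simple-pole terms cancel, leaving $\frac{1}{16zw}\bigl(\frac{4}{(z+w)^2}-\frac{4}{(z-w)^2}\bigr)=-\frac{1}{(z^2-w^2)^2}$.

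So your double-contraction bookkeeping for part I is fine, and the final answers are correct, but as written the argument is incomplete: you must actually show the single-contraction (simple-pole) pieces cancel, and for that you need the Taylor expansion lemma, not just Wick's theorem. Your part II computation is also too sketchy to stand as a proof; the clean route is exactly the one above. (The paper does not reprove part I, citing \cite{DJKM6} and \cite{AngVirC}; your OPE computation for part I would be a valid proof once the single-contraction cancellation is supplied.)
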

 The presence of  the twisted Heisenberg current from the  the above proposition is  known: it appears first in \cite{DJKM6} (proof by brute force using the modes directly); it is used in \cite{OrlovLeur} without proof (a minus sign difference and an equivalent indexing by the odd integers is used there instead). A  proof that  utilizes  the combination of Wick's Theorem and the Taylor expansion Lemma \ref{lem:normalprodexpansion}  can be found in \cite{AngVirC}. The second part of the proposition is implied from the isomorphism to the $\beta\gamma$ system, as we have
\begin{equation}
h_\chi^{\mathbb{Z}}(z)= \Phi_{\beta\gamma}^{-1}\left(h_{\beta\gamma}^{\mathbb{Z}}(z^2)\right) =\Phi_{\beta\gamma}^{-1}\left(:\beta (z^2)\gamma (z^2):\right) =:\beta_\chi(z)\gamma_\chi(z):= \frac{1}{4z}\left(:\chi (z)\chi(z):- :\chi (-z)\chi(-z):\right);
\end{equation}
and in this case the  normal ordered product $:\beta (z^2)\gamma (z^2):$  maps directly via $\Phi_{\beta\gamma}^{-1}$ to the product $:\beta_\chi(z)\gamma_\chi(z):$ with no corrections:
\[
 :\beta_\chi(z)\gamma_\chi(z):= :\Phi_{\beta\gamma}^{-1}\left(\beta (z^2)\right)\Phi_{\beta\gamma}^{-1}\left(\gamma (z^2)\right): = \Phi_{\beta\gamma}^{-1}\left(:\beta (z^2)\gamma (z^2):\right).
 \]
 In general for a twisted vertex algebra isomorphism $\Phi$ we may have that
 $:\Phi \left(a(z)\right)\Phi \left(b(z)\right):$ is related but unequal to   $\Phi \left(:a(z)b(z):\right)$, i.e.,  there may be lower order corrections due to the shifts $z^{l_k}$. Here for this particular case there are no corrections,  as we show directly:
\begin{proof}
The fact that $h_\chi^{\mathbb{Z}}(z)=h_\chi^{\mathbb{Z}}(-z)$ follows immediately. Next, Wick's theorem applies here (see e.g. \cite{MR85g:81096}, \cite{MR99m:81001}, \cite{ACJ}) and we have
\begin{align*}
:\chi(z) \chi(z): &:\chi(w)\chi(w): \sim 2\cdot \frac{1}{z+w}\cdot \frac{1}{z+w} +4\cdot\frac{1}{z+w}:\chi(z)\chi(w): ;\\
:\chi(z) \chi(z): &:\chi(-w)\chi(-w): \sim 2\cdot \frac{1}{z-w}\cdot \frac{1}{z-w} +4\cdot\frac{1}{z-w}:\chi(z)\chi(-w): ;\\
:\chi(-z) \chi(-z): &:\chi(w)\chi(w): \sim 2\cdot \frac{1}{-z+w}\cdot \frac{1}{-z+w} +4\cdot\frac{1}{-z+w}:\chi(-z)\chi(w): ;\\
:\chi(-z) \chi(-z): &:\chi(-w)\chi(-w): \sim 2\cdot \frac{1}{-z-w}\cdot \frac{1}{-z-w} +4\cdot\frac{1}{-z-w}:\chi(-z)\chi(-w): .\\
\end{align*}
Now we apply the Taylor expansion formula from Lemma \ref{lem:normalprodexpansion}:
\begin{align*}
:\chi(z) \chi(z): &:\chi(w)\chi(w): \sim 2\cdot \frac{1}{(z+w)^2} +4\cdot\frac{1}{z+w}:\chi(-w)\chi(w): ;\\
:\chi(z) \chi(z): &:\chi(-w)\chi(-w): \sim 2\cdot \frac{1}{(z-w)^2} +4\cdot\frac{1}{z-w}:\chi(w)\chi(-w): ;\\
:\chi(-z) \chi(-z): &:\chi(w)\chi(w): \sim 2\cdot \frac{1}{(z-w)^2} -4\cdot\frac{1}{z-w}:\chi(-w)\chi(w): ;\\
:\chi(-z) \chi(-z): &:\chi(-w)\chi(-w): \sim 2\cdot \frac{1}{(z+w)^2}-4\cdot\frac{1}{z+w}:\chi(w)\chi(-w): .\\
\end{align*}
The other summands from the Taylor expansion will produce nonsingular  terms and thus do not contribute to  the OPE.  Using that $\chi (w)$ is a boson (even) allows us to cancel and get
\[
h_\chi^{\mathbb{Z}}(z)h_\chi^{\mathbb{Z}} (w)\sim \frac{1}{16zw}\left(\frac{4}{(z+w)^2}- \frac{4}{(z-w)^2}\right)\sim \frac{-1}{(z^2-w^2)^2}.
\]
\end{proof}
Thus the space of fields $\mathit{F_{\chi}}$ has both a  twisted Heisenberg current (and thus a representation of $\mathcal{H}_{\mathbb{Z}+1/2}$) and an untwisted Heisenberg current (and thus a representation of $\mathcal{H}_{\mathbb{Z}}$). One can think of the twisted $\mathcal{H}_{\mathbb{Z}+1/2}$ current as "native" to the twisted algebra $\mathbf{\mathfrak{FD}}\{ \chi (z); 2 \}$ generated by a twisted boson; and of the untwisted $\mathcal{H}_{\mathbb{Z}}$ current  as "inherited" from the twisted vertex algebra $\mathbf{\mathfrak{FD}}\{ \beta (z^2), \gamma (z^2) ; 2\} $ via the isomorphism $\Phi_{\beta\gamma}^{-1}$. Similarly, that means that the $\beta\gamma$ system "inherits" a twisted Heisenberg current from the isomorphism $\Phi_{\beta\gamma}$ as well, via
\[
h_{\beta\gamma}^{\mathbb{Z}+1/2}(z^2)=\Phi_{\beta\gamma}\left(h_{\chi}^{\mathbb{Z}+1/2}(z)\right) =  \Phi_{\beta\gamma}\left( \frac{1}{2}:\chi (z)\chi(-z):\right),
\]
namely we have the following:
\begin{prop}\label{prop:betagammaHes}
Let
\begin{equation}
h_{\beta\gamma}^{\mathbb{Z}+1/2}(z)=\frac{1}{2}:\gamma (z)\gamma (z): -\frac{z}{2}:\beta (z)\beta (z):
\end{equation}
We index $h^{\mathbb{Z}+1/2} (z)$ as
$h_{\beta\gamma}^{\mathbb{Z}+1/2}(z)=\sum _{n\in \mathbb{Z}+1/2} h^{\mathbb{Z}+1/2}_{n} z^{-n-1/2}$. The field $h_{\beta\gamma}^{\mathbb{Z}+1/2}(z)$ has OPE with itself given by:
\begin{equation}
\label{eqn:HeisOPEsC-betagamma}
 h_{\beta\gamma}^{\mathbb{Z}+1/2}(z)h_{\beta\gamma}^{\mathbb{Z}+1/2}(w)\sim -\frac{z+w}{2(z-w)^2},
\end{equation}
and its  modes, $h^{\mathbb{Z}+1/2}_n, \ n\in \mathbb{Z}+1/2$, generate a \textbf{twisted} Heisenberg algebra $\mathcal{H}_{\mathbb{Z}+1/2}$ with relations $[h^{\mathbb{Z}+1/2}_m, h^{\mathbb{Z}+1/2}_n]=-m\delta _{m+n,0}1$, \ $m,n\in \mathbb{Z}+1/2$.
\end{prop}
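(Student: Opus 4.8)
The plan is to establish the OPE \eqref{eqn:HeisOPEsC-betagamma} by a direct Wick's theorem computation, exactly parallel to the one carried out above for $h_\chi^{\mathbb{Z}}$, and then to read off the mode relations from the singular part. As a conceptual check one may also note that, by the remark preceding the statement, $h_{\beta\gamma}^{\mathbb{Z}+1/2}(z^2)=\Phi_{\beta\gamma}\bigl(h_\chi^{\mathbb{Z}+1/2}(z)\bigr)$, since expanding $\Phi_{\beta\gamma}(\chi(z))=\gamma(z^2)+z\beta(z^2)$ and $\Phi_{\beta\gamma}(\chi(-z))=\gamma(z^2)-z\beta(z^2)$ in the bilocal normal ordered product gives $\tfrac12:\gamma(z^2)\gamma(z^2):-\tfrac{z^2}{2}:\beta(z^2)\beta(z^2):$ with no correction terms, the cross terms $z\bigl(:\beta(z^2)\gamma(z^2):-:\gamma(z^2)\beta(z^2):\bigr)$ vanishing because normal ordering is symmetric for even fields; one could then simply transport \eqref{eqn:HeisOPEsC-t} through the isomorphism and relabel $z^2\mapsto z$, $w^2\mapsto w$.

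For the direct computation, write $A(z)=:\gamma(z)\gamma(z):$ and $B(z)=:\beta(z)\beta(z):$, so that $h_{\beta\gamma}^{\mathbb{Z}+1/2}(z)=\tfrac12 A(z)-\tfrac z2 B(z)$. Since $\gamma(z)\gamma(w)\sim 0$ and $\beta(z)\beta(w)\sim 0$, the products $A(z)A(w)$ and $B(z)B(w)$ have no contractions and are nonsingular. For the mixed products the only contraction is through $\gamma(z)\beta(w)\sim -\tfrac{1}{z-w}$, respectively $\beta(z)\gamma(w)\sim \tfrac{1}{z-w}$: Wick's theorem gives a full double contraction contributing $\tfrac{2}{(z-w)^2}$ to each of $A(z)B(w)$ and $B(z)A(w)$, and four single contractions contributing $\mp\tfrac{4}{z-w}:\gamma(z)\beta(w):$, which by the Taylor expansion Lemma \ref{lem:normalprodexpansion} equal $\mp\tfrac{4}{z-w}:\gamma(w)\beta(w):$ modulo regular terms. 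Assembling with the scalar prefactors $-\tfrac w4$ in front of $A(z)B(w)$ and $-\tfrac z4$ in front of $B(z)A(w)$, the double poles sum to $-\tfrac{z+w}{2(z-w)^2}$ while the single poles combine to $\tfrac{w-z}{z-w}:\gamma(w)\beta(w):=-:\gamma(w)\beta(w):$, which is regular and drops out of the OPE; here one uses $:\beta(w)\gamma(w):=:\gamma(w)\beta(w):$ so that the two single‑pole terms cancel. This gives \eqref{eqn:HeisOPEsC-betagamma}.

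To obtain the Heisenberg relations, expand $\tfrac{z+w}{(z-w)^2}=\tfrac{2w}{(z-w)^2}+\tfrac{1}{z-w}$, so $h_{\beta\gamma}^{\mathbb{Z}+1/2}(z)h_{\beta\gamma}^{\mathbb{Z}+1/2}(w)\sim -\tfrac{w}{(z-w)^2}-\tfrac{1}{2(z-w)}$; passing to the commutator replaces $i_{z,w}-i_{w,z}$ of $\tfrac{1}{(z-w)^{k+1}}$ by $\tfrac{1}{k!}\partial_w^k\delta(z-w)$, giving $[h_{\beta\gamma}^{\mathbb{Z}+1/2}(z),h_{\beta\gamma}^{\mathbb{Z}+1/2}(w)]=-w\,\partial_w\delta(z-w)-\tfrac12\delta(z-w)$. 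Expanding $\delta(z-w)=\sum_{k\in\mathbb{Z}}z^{-k-1}w^{k}$ and matching against $h_{\beta\gamma}^{\mathbb{Z}+1/2}(z)=\sum_{n\in\mathbb{Z}+1/2}h^{\mathbb{Z}+1/2}_n z^{-n-1/2}$ forces $k=m-\tfrac12=-n-\tfrac12$, i.e.\ $m+n=0$, and yields $[h^{\mathbb{Z}+1/2}_m,h^{\mathbb{Z}+1/2}_n]=-m\,\delta_{m+n,0}1$ for $m,n\in\mathbb{Z}+1/2$; the half‑integer indexing is exactly the constraint $k=m-\tfrac12$ between the $z$‑ and $w$‑exponents, which is what makes this the twisted Heisenberg algebra $\mathcal{H}_{\mathbb{Z}+1/2}$.

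I do not expect a genuine obstacle: as with $h_\chi^{\mathbb{Z}}$ this is a Wick's‑theorem bookkeeping exercise. The one point demanding care is the interaction of the explicit scalar factors $z$ and $w$ with the pole orders — writing $w=z-(z-w)$ turns contributions around, and one must verify that the surviving ``single‑pole'' piece is genuinely regular after invoking $:\beta\gamma:=:\gamma\beta:$. In the alternative argument via $\Phi_{\beta\gamma}$ the corresponding subtlety is checking that transporting the bilocal product $:\chi(z)\chi(-z):$ through the shifted isomorphism produces no lower‑order corrections, which again rests on the vanishing of the cross terms $z\bigl(:\beta(z^2)\gamma(z^2):-:\gamma(z^2)\beta(z^2):\bigr)$.
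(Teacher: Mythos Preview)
Your proof is correct and follows essentially the same route as the paper's: a direct Wick's theorem computation showing $A(z)A(w)$, $B(z)B(w)$ are regular, the mixed double contractions assemble to $-\tfrac{z+w}{2(z-w)^2}$, and the single-pole pieces cancel via Taylor expansion together with $:\beta\gamma:=:\gamma\beta:$, leaving only a regular remainder. Your additional derivation of the mode relations and the conceptual check via $\Phi_{\beta\gamma}$ go beyond what the paper writes out in its proof, but are consistent with the surrounding discussion.
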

\begin{proof}
From Wick's Theorem we have for the OPEs
\begin{align*}
\big(:\gamma (z)\gamma (z): & -z:\beta (z)\beta (z):\big)\left(:\gamma (w)\gamma (w): -w:\beta (w)\beta (w):\right)\\
&\sim -w:\gamma (z)\gamma (z)::\beta (w)\beta (w): -z:\beta (z)\beta (z)::\gamma (w)\gamma (w):\\
&\sim -4w\cdot \frac{-1}{z-w}:\gamma (z)\beta (w): -2w\cdot \frac{1}{(z-w)^2} -4z\cdot \frac{1}{z-w}:\beta (z)\gamma (w): -2z\cdot \frac{1}{(z-w)^2}.
\end{align*}
From Taylor's lemma we have
\begin{align*}
\frac{4w}{z-w}:\gamma (z)\beta (w):& -\frac{4z}{z-w}:\beta (z)\gamma (w):\\
 &\sim \frac{4w}{z-w}:\gamma (w)\beta (w): -\frac{4w}{z-w}:\beta (w)\gamma (w): -4 :\beta (w)\gamma (w): + \ \text{other\ regular}\sim 0.
\end{align*}
\end{proof}
Notice that since we have
\[
h_{\beta\gamma}^{\mathbb{Z}+1/2}(z^2)=\Phi_{\beta\gamma}\left(h_\chi^{\mathbb{Z}+1/2}(z)\right); \quad h_\chi^{\mathbb{Z}}(z)= \Phi_{\beta\gamma}^{-1}\left(h_{\beta\gamma}^{\mathbb{Z}}(z^2)\right),
\]
 there is no real ambiguity in the labeling  with the same notation  the modes of  these two pairs of fields.

It is always a question of interest in vertex algebras, and conformal field theory in general, whether the vertex algebra under consideration is conformal, in particular whether it possesses Virasoro fields.
 Recall the  well-known  Virasoro algebra $Vir$, the central extension of the complex polynomial vector fields on the circle. The Virasoro  algebra $Vir$ is the Lie algebra with generators $L_n$, $n\in \mathbb{Z}$, and central element $C$, with
commutation relations
\begin{equation}
\label{eqn:VirCRs}
[L_m, L_n] =(m-n)L_{m+n} +\delta_{m, -n}\frac{(m^3-m)}{12}C; \quad [C, L_m]=0, \ m, n\in \mathbb{Z}.
\end{equation}
Equivalently, the 1-point-local Virasoro field
$L(z): =\sum _{n\in \mathbb{Z}} L_{n} z^{-n-2}$
has OPE with itself given by:
\begin{equation}
\label{eqn:VirOPEs}
L(z)L(w)\sim \frac{C/2}{(z-w)^4} + \frac{2L(w)}{(z-w)^2}+ \frac{\partial_{w}L(w)}{(z-w)}.
\end{equation}
\begin{defn}\label{defn:VirStr}
We say that a twisted vertex algebra with a space of fields $V$ has a Virasoro structure if there is field in $V$ such that its modes are the generators of  the Virasoro algebra $Vir$.
\end{defn}
 In this case there is in fact an abundance of Virasoro structures: the two different bosonic currents  ensure that there are at least  two different families of Virasoro fields in each of the twisted vertex algebras $\mathbf{\mathfrak{FD}}\{ \chi (z); 2 \}$ and  $\mathbf{\mathfrak{FD}}\{ \beta (z^2), \gamma (z^2) ; 2\} $. In fact there are not two, but three Virasoro families.
 On the one hand,  the $\beta\gamma$ system possesses two different families of Virasoro fields (\cite{FMS}). First,  the two-parameter family constructed from the Heisenberg field $h^{\mathbb{Z}}(z)$:
 for any $a, b \in \mathbb{C}$ the field
\begin{equation}
\label{eqn:Vir1}
L_1^{\beta\gamma; (a, b)} (z)=-\frac{1}{2}:h^{\mathbb{Z}} (z)^2: +a \partial_z h^{\mathbb{Z}} (z) +\frac{b}{z}h^{\mathbb{Z}}(z) +\frac{2ab- b^2}{2z^2}.
\end{equation}
is a Virasoro field with central charge $1+12a^2$ (the central charge is independent of $b$).
Note that due to the shift $\frac{b}{z}$ the two-parameter field $L_1^{\beta\gamma; (a, b)}$ can  not be a vertex operator in a (usual) one-point-local vertex algebra, but the shifts have to be allowed in a twisted vertex algebra, as they are inevitable. The ability to vary the second parameter $b$ is particularly useful  for constructing highest weight representations of $Vir$ with particular values  $(c, h)$, where  $c\in \mathbb{C}$ is the central charge, and $h\in \mathbb{C}$ is the weight of the operator $L_0$ acting on the highest  weight vector.   The  field $L_1^{\beta\gamma; (a, 0)}$ (the case $b =0$) is  discussed  in \cite{FMS}, \cite{FF1}. Via the isomorphism $\Phi_{\beta\gamma}$ the field $\Phi_{\beta\gamma}^{-1}\left(L_1^{\beta\gamma; (a, b)} (z)\right)$ is a Virasoro field inside  $\mathbf{\mathfrak{FD}}\{ \chi (z); 2 \}$ as well.

Second, there is another two parameter parameter family of Virasoro fields  in $\mathbf{\mathfrak{FD}}\{ \beta (z), \gamma (z) ; 1\} $ given by (\cite{FMS}):
\begin{equation}
\label{eqn:Vir2}
L_2^{\beta\gamma;\  (\lambda, \mu)} (z)=\lambda \left(\partial_z\beta (z)\right)\gamma (z) +(\lambda +1)\beta (z)\left(\partial_z\gamma (z)\right) +\frac{\mu}{z} \beta (z)\gamma (z) +\frac{(2\lambda +1)\mu -\mu^2}{2z^2},
\end{equation}
for any $a, b \in \mathbb{C}$ the field. The central charge here is $2+12\lambda +12\lambda^2 = 3(2\lambda +1)^2 -1$. In \cite{FMS} (and all the other references we found) only the first parameter part of this family (when $\mu =0$) is given, again perhaps for the reason that $\frac{\mu}{z} \beta (z)\gamma (z)$ is not a vertex operator in the (usual) one-point-local vertex algebra.  Nevertheless it can be shown by direct computation that \eqref{eqn:Vir2} is indeed a Virasoro field for any choice of $\lambda , \mu \in \mathbb{C}$. Via the isomorphism $\Phi_{\beta\gamma}$ the field $\Phi_{\beta\gamma}^{-1}\left(L_2^{\beta\gamma;\  (\lambda, \mu)} (z)\right)$ is a Virasoro field inside  $\mathbf{\mathfrak{FD}}\{ \chi (z); 2 \}$ as well. Note that \eqref{eqn:Vir1} and \eqref{eqn:Vir2} are clearly independent families, since the leading normal order product $:h^{\mathbb{Z}} (z)^2:$ in $\Phi_{\beta\gamma}^{-1}\left(L_1^{\beta\gamma; (a, b)} (z)\right)$  is a linear combination of fourth order normal order products in $\chi(z)$ and $\chi(-z)$, as opposed to $\Phi_{\beta\gamma}^{-1}\left(L_2^{\beta\gamma; \  (\lambda, \mu)} (z)\right)$ which has at most  second order normal order products in $\chi(z)$, $\chi(-z)$ and their derivatives.

   On the other hand, inherited from the twisted vertex algebra $\mathbf{\mathfrak{FD}}\{ \chi (z); 2 \}$,  a one parameter family can be constructed from the field $h^{\mathbb{Z}+1/2}(z)$ (see e.g. \cite{FLM}, and \cite{AngVirC} for the case $\kappa =0$):
   Let
   \begin{equation}
   \label{eqn:Vir3}
   L^{\chi;\  \kappa} (z)=\left(-\frac{1}{2z}:h^{\mathbb{Z}+1/2} (z)^2: +\frac{1}{16z^2}\right) +\kappa \left(h^{\mathbb{Z}+1/2} (z) -\kappa\frac{ z}{2}\right).
   \end{equation}
    The central charge in this case is fixed, $c=1$. It  can be proved that in this case there is no two-parameter family that includes a derivative of the field $h^{\mathbb{Z}+1/2}(z)$, even if one attempts to include corrections. A brute force calculation shows that if  the derivative of the field $h^{\mathbb{Z}+1/2}(z)$ is included, one cannot eliminate the third order pole in the OPE.  Another way to show that the derivative would introduce a non-removable third order pole is by the use of $\lambda$ brackets for vertex algebras and their twisted modules. Since after re-scaling and a change of variables $z\to \sqrt{z}$ the field $h^{\mathbb{Z}+1/2}(z)$ on its own  generates through its derivative descendants a twisted module for an (ordinary) vertex algebra, the lambda brackets can be applied. The author thanks Bojko Bakalov for the very helpful discussion of the $\lambda$ brackets approach confirming this. Since this calculation is representative of the correction term, here  involving $h^{\mathbb{Z}+1/2} (z)$, but similar to the second-parameter-corrections in \eqref{eqn:Vir1} and \eqref{eqn:Vir2}, we give a proof here (as we couldn't find a reference to it, and it is the "strangest" of the three).
    \begin{proof}
    We have by Wick's Theorem combined with Taylor's lemma
    \begin{align*}
    \frac{1}{2z}&:h^{\mathbb{Z}+1/2} (z)^2: \frac{1}{2w}:h^{\mathbb{Z}+1/2} (w)^2:\sim \frac{1}{zw}\left(-\frac{1}{2(z-w)} -\frac{w}{(z-w)^2}\right):h^{\mathbb{Z}+1/2} (z)h^{\mathbb{Z}+1/2} (w):  +\frac{(z+w)^2}{8zw(z-w)^4}\\
    &\sim \frac{1}{zw}\left(-\frac{1}{2(z-w)} -\frac{w}{(z-w)^2}\right)\left(:h^{\mathbb{Z}+1/2} (w)h^{\mathbb{Z}+1/2} (w): +(z-w):\partial_w h^{\mathbb{Z}+1/2} (w)h^{\mathbb{Z}+1/2} (w):  +\dots \right)\\
    &  \hspace{5cm}  +\frac{1}{2(z-w)^4} + \frac{1}{8zw(z-w)^2}\\
    &\sim -\frac{1}{(z-w)^2}\frac{:h^{\mathbb{Z}+1/2} (w)^2:}{w} +\frac{1}{z-w}\left(+\frac{:h^{\mathbb{Z}+1/2} (w)^2:}{2w^2} -\frac{:\partial_w h^{\mathbb{Z}+1/2} (w)h^{\mathbb{Z}+1/2} (w):}{w}\right) \\
    &  \hspace{5cm}   + \frac{1}{8w^2(z-w)^2} -\frac{1}{8w^3(z-w)} +\frac{1}{2(z-w)^4}.
\end{align*}
Denote $L^t(z) =\left(-\frac{1}{2z}:h^{\mathbb{Z}+1/2} (z)^2: +\frac{1}{16z^2}\right)$, we have just shown that
\begin{align*}
L^t(z)L^t(w)\sim \frac{2L^t(w)}{(z-w)^2} +\frac{\partial_w L^t(w)}{z-w} +\frac{1}{2(z-w)^4}.
\end{align*}
We have
    \begin{align*}
    \frac{1}{2z}&:h^{\mathbb{Z}+1/2} (z)^2: h^{\mathbb{Z}+1/2} (w)\sim \left(\frac{1}{w} -\frac{(z-w)}{w^2} +\dots \right)\left(-\frac{1}{2(z-w)} -\frac{w}{(z-w)^2}\right)h(z)\\
    &\sim -\frac{1}{(z-w)^2}h^{\mathbb{Z}+1/2} (w) +\frac{1}{z-w}\left(\frac{h^{\mathbb{Z}+1/2} (w)}{2w} -\partial_w h^{\mathbb{Z}+1/2} (w)\right);
    \end{align*}
    and so
  \begin{align*}
    \frac{1}{2z}:h^{\mathbb{Z}+1/2} (z)^2: h^{\mathbb{Z}+1/2} (w)& +\frac{1}{2w}h^{\mathbb{Z}+1/2} (z) : h^{\mathbb{Z}+1/2} (w)^2: \sim -\frac{2}{(z-w)^2}h^{\mathbb{Z}+1/2} (w) -\frac{1}{z-w}\partial_w h^{\mathbb{Z}+1/2} (w).
    \end{align*}
    Thus
    \begin{align*}
 L^t(z)&L^t(w) + \kappa \left(L^t(z)h^{\mathbb{Z}+1/2} (w) +h^{\mathbb{Z}+1/2} (z) L^t(w)\right)+ \kappa^2 h^{\mathbb{Z}+1/2} (z)h^{\mathbb{Z}+1/2} (w) \\
 &\sim \frac{1}{(z-w)^2}\left(2L^t(w) +2\kappa h^{\mathbb{Z}+1/2} (w) -\kappa^2 w\right) +\frac{1}{z-w}\left(\partial_w L^t(w) +\kappa \partial_w h^{\mathbb{Z}+1/2} (w) -\frac{\kappa^2}{2}\right) +\frac{1}{2(z-w)^4}.
 \end{align*}
\end{proof}
The possibility to vary the parameter $\kappa$ in order to obtain different highest weight values  $h\in \mathbb{C}$  of the operator $L_0$, and thus construct varying highest weight representations, compensates for the not so enlightening proof of the correction in \eqref{eqn:Vir3}. Proposition \ref{prop:betagammaHes} then enables us to view  the field $L^{\chi;\  \kappa} (z)$ as part of the $\beta\gamma$ system. For similar reasons as before, it is clear that it is a separate and different family than those of \eqref{eqn:Vir1} and \eqref{eqn:Vir2}.

    In \cite{AngVirC}, while studying the conformal structures inside $\mathbf{\mathfrak{FD}}\{ \chi (z); 2 \}$ (different notation was used there, e.g. $\phi^C(z)$ instead of $\chi (z)$), we found by direct computation a  "solitary"  Virasoro field with central charge $c=-1$:
\begin{equation}
\widetilde{L^{C, 1}} (z^2)= -\frac{1}{8z^2}\left(:(\partial_z\chi(z))\chi(-z): + :(\partial_{-z}\chi(-z))\chi(z):\right) -\frac{1}{32z^4},
\end{equation}
In view of the isomorphism $\Phi_{\beta\gamma}$ one can show that $\widetilde{L^{C, 1}} (z^2)$ is the field $\Phi_{\beta\gamma}^{-1}\left(L_2^{\beta\gamma; (-1/2, 1/4)} (z)\right)$, i.e., the case  $\lambda =-\frac{1}{2}$ and $\mu =\frac{1}{4}$ of the family \eqref{eqn:Vir2}. This then explained the puzzling correction of $-\frac{1}{32z^4}$.

\section{The  case of $N=2n$: the symplectic bosons}
\label{section: symplectic}

We now consider the  general case of even $N$, $N=2n$, $n\in \mathbb{N}$. Recall the symplectic bosons of \cite{GoddardSympl}: they are the 2n bosonic fields $\xi ^a (z), \ a=1, 2, \dots , 2n$, $\xi ^a (z) =\sum _{s\in \mathbb{Z}} \xi ^a_n z^{-n}$,  with OPE
\begin{equation}
\xi ^a (z) \xi ^b (w)\sim iJ^{a, b} \frac{1}{z-w},
\end{equation}
where $i\in \mathbb{C}$ is the imaginary unit. In general  $J$  is a real antisymmetric non-singular matrix, $J^t =-J$, but as in  \cite{GoddardSympl} we assume without loss of generality that here $J$ is the block-diagonal matrix with $n$ copies of
\begin{displaymath}
\left(\begin{array}{cc}
0 & 1\\
-1 & 0
\end{array}\right)
\end{displaymath}
along the diagonal and zeroes elsewhere. Consider the Fock space $\mathit{F_{sb}}$ of the symplectic bosons, defined by a vacuum vector $|0\rangle $ such that $\xi ^a_n |0\rangle =0$ for any $n>0$ and $a=1, 2, \dots ,2n$. The fields $\xi ^a (z), \ a=1, 2, \dots , 2n$ generate an ($N=1$)  vertex algebra with space of states $\mathit{F_{sb}}$ and space of fields $\mathbf{\mathfrak{FD}}\{\xi ^1 (z),  \dots ,  \xi ^{2n}(z) ; 1\} $ (isomorphic to  $\mathit{F_{sb}}$ via the state-field correspondence for super vertex algebras).

Consider the twisted vertex algebra with    space of states $\mathit{F_{\chi}}$ as in the previous section, but space of fields $\mathbf{\mathfrak{FD}}\{ \chi (z) ; 2n\} $ (i.e., we start with the same generating field $ \chi (z)$, with OPE as in \eqref{equation:OPE-C}, but now allow  $2n$ roots of unity action on its descendant fields).
Let $\epsilon$ be a primitive $2n$ root of unity.
\begin{prop}
Define
\begin{align}
\xi_\chi^a (z) &=\frac{1}{2nz^{a}}\cdot\sum_{k=0}^{2n-1}\epsilon^{-ka}\chi(\epsilon^k z), \quad \text{for} \ \ a=1, 3, \dots , 2n-1, \ a-odd; \\
\xi_\chi^a (z) &=\frac{i}{2nz^{2n-a}}\cdot\sum_{k=0}^{2n-1}\epsilon^{ka}\chi(\epsilon^k z), \quad \text{for} \ \ a=2, 4, \dots , 2n, \ a-even.
\end{align}
Then the following OPEs hold:
\begin{equation}
\xi_\chi ^a (z) \xi_\chi ^b (w)\sim iJ^{a, b} \frac{1}{z^{2n}-w^{2n}}.
\end{equation}
\end{prop}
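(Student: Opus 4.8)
The plan is to reduce everything to the elementary identity
\[
\chi(\epsilon^k z)\,\chi(\epsilon^l w) \sim \frac{1}{\epsilon^k z + \epsilon^l w},
\]
obtained by substituting the scaled arguments into the defining OPE \eqref{equation:OPE-C} (equivalently, reading it off the mode relations \eqref{eqn:Com-C}), and then to carry out the resulting double sum over roots of unity explicitly. As a preliminary point I would note that each $\xi_\chi^a(z)$ is a genuine field in $\mathfrak{FD}\{\chi(z);2n\}$: the apparent negative power $z^{-a}$ (resp.\ $z^{-(2n-a)}$) is harmless because $\sum_{k}\epsilon^{-ka}\chi(\epsilon^k z)$ is divisible by $z^{a}$, exactly as $\chi(z)-\chi(-z)$ was divisible by $z$ in the $N=2$ case; hence Lemma \ref{lem:OPE} applies and the singular part computed below is the OPE.

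For each of the four parity combinations of $a$ and $b$ I would record the expansion
\[
\xi_\chi^a(z)\xi_\chi^b(w) \sim \frac{c_{a,b}}{(2n)^2\, z^{\ast}w^{\ast}}\sum_{k,l=0}^{2n-1}\epsilon^{\pm ka\pm lb}\,\frac{1}{\epsilon^k z+\epsilon^l w},
\]
where $c_{a,b}\in\{1,i,i^2\}$ collects the factors of $i$ from the two defining formulas and $z^{\ast},w^{\ast}$ are the appropriate powers. Expanding $\dfrac{1}{\epsilon^k z+\epsilon^l w}=\epsilon^{-k}\sum_{m\ge 0}(-1)^m\epsilon^{(l-k)m}z^{-m-1}w^m$ in the region $\abs{z}\gg\abs{w}$ and interchanging the (finite inner) sums, the $k$- and $l$-sums decouple and each is governed by the character orthogonality $\sum_{k=0}^{2n-1}\epsilon^{kr}=2n$ if $2n\mid r$ and $0$ otherwise.

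This leaves two Kronecker-type constraints on the summation index $m$, one from the $k$-sum and one from the $l$-sum. The key combinatorial point — the step that makes the statement true — is that these two constraints are simultaneously satisfiable if and only if $b-a\equiv 1\pmod{2n}$ (odd $a$, even $b$) or $a-b\equiv 1\pmod{2n}$ (even $a$, odd $b$), and are never simultaneously satisfiable when $a,b$ have the same parity, since the consistency condition would force an even number to be $\equiv 1\pmod{2n}$. Inside the admissible range $1\le a,b\le 2n$, the congruence $b-a\equiv 1$ forces $b=a+1$, i.e.\ exactly the positions where $J^{a,b}=\pm1$; so the vanishing/non-vanishing pattern of the OPE matches that of $J$ automatically (in particular the same-parity cases, including the self-OPE $a=b$, vanish as they must).

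When the constraints are compatible, the surviving index $m$ runs over an arithmetic progression ($m\equiv -b\pmod{2n}$, resp.\ $m\equiv b$), and substituting $b=a\pm1$ makes the leftover powers collapse: the prefactor $z^{-a}w^{-(2n-b)}$ (or its analogue) exactly cancels the surplus powers of $z$ and $w$, leaving $\pm i\sum_{j\ge 0}z^{-2nj-2n}w^{2nj}$. Recognizing this as $i_{z,w}\dfrac{1}{z^{2n}-w^{2n}}$, and checking that $c_{a,b}$ together with the sign $(-1)^m=(-1)^b$ reproduces $iJ^{a,b}$ — namely $+i$ when $a$ is odd and $b=a+1$, and $-i$ when $a$ is even and $b=a-1$ — finishes the argument. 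I expect the only real work to be this bookkeeping of powers and signs across the four cases; the conceptual content is entirely in the character-orthogonality step.
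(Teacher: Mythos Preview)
Your proposal is correct and complete; the character-orthogonality argument with the geometric-series expansion does exactly what you claim, and the parity obstruction for the same-parity cases is the right mechanism.

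The paper, however, organizes the computation differently. Instead of expanding $\dfrac{1}{\epsilon^k z+\epsilon^l w}$ as a power series in $w/z$ and applying orthogonality to both the $k$- and $l$-sums, the paper first evaluates the inner sum in closed form via the partial-fraction identity
\[
\sum_{k=0}^{2n-1}\frac{\epsilon^{kl}}{z+\epsilon^k w}=\frac{2n(-1)^l z^{l-1}w^{2n-l}}{z^{2n}-w^{2n}},
\]
which is proved in the Appendix by Lagrange interpolation at the $2n$th roots of unity. This collapses one summation to a rational function in $z,w$; the remaining sum is then a single character sum $\sum_k \epsilon^{(b-1-a)k}$ producing the Kronecker delta $\delta_{a,b-1}$ directly, and the trivial OPEs follow because $a+b-1$ is odd when $a,b$ have equal parity. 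Your route trades that auxiliary identity for an extra orthogonality step and a series resummation at the end; it is more elementary (no interpolation lemma needed) but requires tracking the formal expansion $i_{z,w}$ explicitly. The paper's route stays entirely in the realm of rational functions and makes the factor $\dfrac{1}{z^{2n}-w^{2n}}$ appear in one stroke rather than as a resummed geometric series.
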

\begin{proof}
We will use the following formula (proof is provided in the Appendix), which is valid for any $l\in \mathbb{Z}_{>0}$, \ $1\leq l\leq 2n$:
\begin{equation}
\label{eqn:rofu-formula}
\frac{1}{z+w} + \frac{\epsilon^l}{z+\epsilon w} +\frac{\epsilon^{2l}}{z+\epsilon^2 w} +\dots \frac{\epsilon^{(2n-1)l}}{z+\epsilon^{2n-1} w} =\frac{2n(-1)^l z^{l-1}w^{2n-l}}{z^{2n} -w^{2n}}.
\end{equation}
We have then the following formulas for the OPEs:
\begin{equation*}
\chi (z) \left(\sum_{k=0}^{2n-1}\epsilon^{kb}\chi(\epsilon^k w)\right) \sim \frac{2n(-1)^b z^{b-1}w^{2n-b}}{z^{2n} -w^{2n}};
\end{equation*}
and so
\begin{align*}
\left(\sum_{k=0}^{2n-1}\epsilon^{-ka}\chi(\epsilon^k z)\right) \left(\sum_{k=0}^{2n-1}\epsilon^{kb}\chi(\epsilon^k w)\right)  &\sim \sum_{k=0}^{2n-1}  \frac{2n(-1)^b \epsilon ^{-ak} (\epsilon^k z)^{b-1}w^{2n-b}}{z^{2n} -w^{2n}} \\
& \sim \frac{2n(-1)^b z^{b-1}w^{2n-b}}{z^{2n} -w^{2n}}\cdot \sum_{k=0}^{2n-1} \epsilon ^{(b-1-a)k} =\frac{4n^2(-1)^b z^{a}w^{2n-b}}{z^{2n} -w^{2n}}\delta_{a, b-1}.
\end{align*}
Similarly from
\begin{equation*}
\chi (z) \left(\sum_{k=0}^{2n-1}\epsilon^{-ka}\chi(\epsilon^k w)\right) =\chi (z) \left(\sum_{k=0}^{2n-1}\epsilon^{(2n-a)k}\chi(\epsilon^k w)\right) \sim \frac{2n(-1)^a z^{2n-a-1}w^{a}}{z^{2n} -w^{2n}};
\end{equation*}
we get
\begin{align*}
\left(\sum_{k=0}^{2n-1}\epsilon^{kb}\chi(\epsilon^k z)\right) &\left(\sum_{k=0}^{2n-1}\epsilon^{-ka}\chi(\epsilon^k w)\right)   \sim \sum_{k=0}^{2n-1}  \frac{2n(-1)^b \epsilon ^{kb} (\epsilon^k z)^{2n-a-1}w^{a}}{z^{2n} -w^{2n}} \\
& \sim \frac{2n(-1)^b z^{2n-a-1}w^{a}}{z^{2n} -w^{2n}}\cdot \sum_{k=0}^{2n-1} \epsilon ^{(2n-a +b-1)k} =\frac{4n^2(-1)^b z^{2n-b}w^{a}}{z^{2n} -w^{2n}}\delta_{a+1, b}.
\end{align*}
To obtain  the trivial OPEs note that we only need consider the cases when $a$ and $b$ are both even or both odd, thus $a+b-1$ is odd, and so in that case
\begin{align*}
\left(\sum_{k=0}^{2n-1}\epsilon^{ka}\chi(\epsilon^k z)\right) \left(\sum_{k=0}^{2n-1}\epsilon^{kb}\chi(\epsilon^k w)\right)  &\sim \sum_{k=0}^{2n-1}  \frac{2n(-1)^b \epsilon ^{ak} (\epsilon^k z)^{b-1}w^{2n-b}}{z^{2n} -w^{2n}}\\
 &\sim \frac{2n(-1)^b z^{b-1}w^{2n-b}}{z^{2n} -w^{2n}}\cdot \sum_{k=0}^{2n-1} \epsilon ^{(a+b-1)k} =0.
\end{align*}
\end{proof}
Thus we have the following:
\begin{prop}
The $2n$-point-local twisted vertex algebra with space of states $\mathit{F_{\chi}}$ and space of fields $\mathbf{\mathfrak{FD}}\{ \chi (z) ; 2n\} $    is isomorphic  to the twisted vertex algebra of the symplectic bosons with  space of states $\mathit{F_{sb}}$ and space of fields
$\mathbf{\mathfrak{FD}}\{ \xi ^1 (z^{2n}),  \dots ,  \xi ^{2n}(z^{2n}) ; 2n\} $ via the invertible map $\Phi_{sb}$ defined by:
\begin{align*}
&\Phi_{sb}\left(\chi(z)\right) =z\xi^1 (z^{2n}) + z^3\xi^3 (z^{2n}) \dots +z^{2n-1}\xi^{2n-1} (z^{2n})-i\xi^{2n}(z^{2n})-iz^{2}\xi^{2n-2} (z^{2n})\dots -iz^{2n-2}\xi^{2} (z^{2n}); \\
&\Phi_{sb}^{-1}\left(\xi^a (z^{2n})\right) =\xi_\chi^a (z) = \frac{1}{2nz^{a}}\cdot\sum_{k=0}^{2n-1}\epsilon^{-ka}\chi(\epsilon^k z), \quad \text{for} \ \ a=1, 3, \dots , 2n-1, \ a-odd; \\
 &\Phi_{sb}^{-1}\left(\xi^a (z^{2n})\right) =\xi_\chi^a (z) = \frac{i}{2nz^{2n-a}}\cdot\sum_{k=0}^{2n-1}\epsilon^{ka}\chi(\epsilon^k z), \quad \text{for} \ \ a=2, 4, \dots , 2n, \ a-even.
\end{align*}
\end{prop}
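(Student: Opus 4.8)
The plan is to follow the three-step pattern used for the $\beta\gamma$ system in Section~\ref{section:betagamma}: present both $\Phi_{sb}$ and $\Phi_{sb}^{-1}$ as OPE-preserving assignments of the generating fields on the two sides, extend each to a homomorphism of twisted vertex algebras, and then check that the two composites are the identity by reducing to generators. For the map $\Phi_{sb}^{-1}$, observe first that each $\xi_\chi^a(z)$ is, by its defining formula, a finite linear combination of the rescalings $\chi(\epsilon^k z)$ with a monomial coefficient $z^{-a}$ or $z^{-(2n-a)}$, so $\xi_\chi^a(z)\in\mathfrak{FD}\{\chi(z);2n\}$; and by the preceding Proposition the $\xi_\chi^a(z)$ satisfy precisely the symplectic boson OPEs $\xi_\chi^a(z)\xi_\chi^b(w)\sim iJ^{a,b}/(z^{2n}-w^{2n})$, with trivial OPE for the remaining pairs. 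Since the twisted vertex algebra structure on $\mathfrak{FD}\{\xi^1(z^{2n}),\dots,\xi^{2n}(z^{2n});2n\}$ (built via Proposition~\ref{prop:GenF-TVA}) is determined by the generators $\xi^a(z^{2n})$ and their mutual OPEs, the assignment $\xi^a(z^{2n})\mapsto\xi_\chi^a(z)$ extends uniquely to a homomorphism of twisted vertex algebras $\Phi_{sb}^{-1}\colon\mathfrak{FD}\{\xi^1(z^{2n}),\dots,\xi^{2n}(z^{2n});2n\}\to\mathfrak{FD}\{\chi(z);2n\}$, respecting $\partial_z$, the $\epsilon$-rescalings, normal ordered products and OPE coefficients in the sense of Definition~\ref{defn:isomTVA} (normal ordered products being transported only up to the lower-depth corrections forced by the $z^l$-shifts, exactly as in the $N=2$ discussion).

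For the map $\Phi_{sb}$, I would verify that $\Phi_{sb}(\chi(z))$, as written, reproduces the OPE \eqref{equation:OPE-C}. Forming $\Phi_{sb}(\chi(z))\,\Phi_{sb}(\chi(w))$ with the symplectic boson OPEs and the block form of $J$, only the pairs $(a,b)=(2j-1,2j)$ and $(2j,2j-1)$, $j=1,\dots,n$, contribute, and the singular part collapses to $\frac{1}{z^{2n}-w^{2n}}\sum_{j=1}^n\bigl(z^{2j-1}w^{2n-2j}-z^{2n-2j}w^{2j-1}\bigr)$. The geometric sum $\sum_{j=1}^n z^{2j-1}w^{2n-2j}=z(z^{2n}-w^{2n})/(z^2-w^2)$ (a specialization of \eqref{eqn:rofu-formula}) together with its $z\leftrightarrow w$ image gives $\sum_{j=1}^n\bigl(z^{2j-1}w^{2n-2j}-z^{2n-2j}w^{2j-1}\bigr)=(z^{2n}-w^{2n})/(z+w)$, whence $\Phi_{sb}(\chi(z))\Phi_{sb}(\chi(w))\sim 1/(z+w)$. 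Thus $\chi(z)$ and its image have the same self-OPE, and since $\chi(z)$ generates $\mathfrak{FD}\{\chi(z);2n\}$, Proposition~\ref{prop:GenF-TVA} again extends $\chi(z)\mapsto\Phi_{sb}(\chi(z))$ to a twisted vertex algebra homomorphism $\Phi_{sb}$ in the opposite direction, with $\Phi_{sb}(|0\rangle_{\mathit{F_{\chi}}})=|0\rangle_{\mathit{F_{sb}}}$ because $\Phi_{sb}$ fixes $Id_W$.

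It then remains to see that $\Phi_{sb}^{-1}\circ\Phi_{sb}$ and $\Phi_{sb}\circ\Phi_{sb}^{-1}$ are the identity. Each is a homomorphism of twisted vertex algebras, hence is determined by its values on generators, so it suffices to evaluate them on $\chi(z)$ and on the $\xi^a(z^{2n})$. For the first, $\Phi_{sb}^{-1}(\Phi_{sb}(\chi(z)))=\sum_{a\ \mathrm{odd}}z^a\,\xi_\chi^a(z)-i\sum_{a\ \mathrm{even}}z^{2n-a}\,\xi_\chi^a(z)=\frac{1}{2n}\sum_{k=0}^{2n-1}\chi(\epsilon^k z)\sum_{a=0}^{2n-1}\epsilon^{ka}=\chi(z)$, using that over the odd residues $\sum_a\epsilon^{-ka}=\sum_a\epsilon^{ka}$ and $\sum_{a=0}^{2n-1}\epsilon^{ka}=2n\,\delta_{k,0}$. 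For the second, expanding $\Phi_{sb}(\xi_\chi^a(z))$ through $\Phi_{sb}(\chi(\epsilon^k z))$ and using $(\epsilon^k z)^{2n}=z^{2n}$ with the orthogonality $\sum_{k=0}^{2n-1}\epsilon^{k(b-a)}=2n\,\delta_{a,b}$ (together with the vanishing of $\sum_k\epsilon^{-k(a+b)}$ when $a+b$ is odd) collapses the double sum to $\xi^a(z^{2n})$. Hence $\Phi_{sb}$ is an isomorphism of twisted vertex algebras.

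The main obstacle is the structural input underlying the two extension steps: extracting from Proposition~\ref{prop:GenF-TVA} that an OPE-compatible assignment of generating fields extends to a \emph{bijective} homomorphism of twisted vertex algebras, and controlling the $z^l$-shifts that this extension forces on the images of normal ordered products (so that $:\Phi(a)\Phi(b):$ equals $\Phi(:ab:)$ only up to lower-depth descendants). This is handled by induction on the depth of a descendant --- the number of normal-ordering and OPE-coefficient operations used to build it --- exactly as in the $N=2$ case; everything else is a routine manipulation of finite sums of $2n$-th roots of unity.
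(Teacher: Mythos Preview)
Your proposal is correct and takes the same route the paper implicitly follows: the paper gives no explicit proof of this proposition, writing only ``Thus we have the following:'' after the OPE computation of the preceding proposition, so the OPE verification for $\Phi_{sb}^{-1}$ is already in place and the rest is left to the reader. Your write-up supplies exactly the missing checks --- that $\Phi_{sb}(\chi(z))$ has self-OPE $1/(z+w)$ via the geometric-sum identity, and that the two assignments are mutually inverse on generators via the orthogonality relations $\sum_{k=0}^{2n-1}\epsilon^{km}=2n\,\delta_{m\equiv 0}$ --- together with the structural remark that the extension to all of $\mathfrak{FD}$ rests on Proposition~\ref{prop:GenF-TVA}. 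One small remark: the identity $\sum_{j=1}^n z^{2j-1}w^{2n-2j}=z(z^{2n}-w^{2n})/(z^2-w^2)$ is just the finite geometric series, not really a specialization of \eqref{eqn:rofu-formula}, though the spirit is the same.
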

Via this twisted algebra isomorphism the Fock space $\mathit{F_{\chi}}$  inherits  through the symplectic bosons the many and various superalgebra representations developed in \cite{GoddardSympl}, including the super Sugawara construction (\cite{GoddardSympl}) and the representations of the $W_{1+\infty}$ algebra (see e.g. \cite{KR-W}, \cite{Matsuo}) and the $W_3$ algebra (\cite{BC}, \cite{BMP}, \cite{Wangbosonization}). This should help explain the additional symmetries (\cite{Ma}) of the CKP hierarchy with which the field $\chi (z)$ is associated (\cite{DJKM6}).

\section{Appendix}
We prove  formula \eqref{eqn:rofu-formula} since we couldn't find a reference to it.
We start with an interpolation formula.  Let $P(x)$ be a monic polynomial or order $N$ with distinct roots and denote its  roots by $x_1, x_2, \dots , x_N$. Denote by $P_i(x)$, \ $i=1, 2, \dots , N$, the monic polynomial
\[
P_i (x) =(x-x_1)(x-x_2)\dots \widehat{(x-x_i)}\dots (x-x_N),
\]
where $ \widehat{(x-x_i)}$ signifies that the term $(x-x_i)$ is missing from the product. Hence $P_i(x_j) =0$ for $i\neq j$, $P_i(x_i) \neq 0$. The following interpolation formula holds for any $x$ and any $l\in \mathbb{Z}_{>0}, 1\leq l\leq N$:
\[
x^{l-1} =\frac{x_1^{l-1}P_1 (x)}{P_1(x_1)}+ \frac{x_2^{l-1}P_2 (x)}{P_2(x_2)} +\dots +\frac{x_N^{l-1}P_N (x)}{P_N(x_N)}.
\]
(The two sides are polynomials of degree less than $N$ which coincide for each $x_i$, $i=1, 2, \dots , N$).
Now consider the case when the roots of the polynomial $P(x)$ are the $N$th roots of unity. Let $\epsilon$ be a primitive $N$th root of unity and without loss of generality we assume $x_i =\epsilon ^{i-1}, \ i=1, 2, \dots , N$. Then $P(x) =x^N -1$ and
\[
P_i(x_i) =\left(\partial_x P(x)\right)\arrowvert_{x=x_i} =Nx_i^{N-1} =N\epsilon ^{1-i} =Nx_i^{-1},  \ i=1, 2, \dots , N.
\]
Thus we have
\[
Nx^{l-1} =x_1^lP_1 (x) + x_2^lP_2 (x) + \dots x_N^lP_N(x) = P_1 (x) + \epsilon^{l}P_2(x) +\dots +\epsilon^{l(N-1)}P_N(x)
\]
Now we divide by $P(x)$:
\[
\frac{Nx^{l-1}}{x^N -1} =\frac{1}{x-1} +\frac{\epsilon ^{l}}{x-\epsilon} +\dots +\frac{\epsilon ^{l(N-1)}}{x-\epsilon^{N-1}}.
\]
The proof is then finished by substituting $x=-\frac{z}{w}$, with $N=2n$.

\def\cprime{$'$}

 \end{document}